\newcounter{saetning} 
  \newtheorem{theo}[saetning]{Theorem}
  \newtheorem{coro}[saetning]{Corollary}
  \newtheorem{lemm}[saetning]{Lemma}
  \newtheorem{prop}[saetning]{Proposition}
\theoremstyle{definition} 
	\newtheorem{defi}{Definition}
\theoremstyle{remark}
\definecolor{gray}{gray}{0.6}
\newcommand{\N}{\mathbb{N}}
\newcommand{\E}{\mathbb{E}}
\title{A numbers-on-foreheads game}
\begin{document}
\date{}
\author{Sune K. Jakobsen\thanks{School of Mathematical Sciences and School of Electronic Engineering \&
Computer Science, Queen Mary University of London, Mile End Road, London,
E1 4NS, UK. Email: S.K.Jakobsen@qmul.ac.uk.}}
  \maketitle

  \section*{Abstract}
  Is there a joint distribution of $n$ random variables over the natural numbers, such that they always form an increasing sequence and whenever you take two subsets of the set of random variables of the same cardinality, their distribution is almost the same?
  
    We show that the answer is yes, but that the random variables will have to take values as large as $2^{2^{\dots ^{2^{\Theta\left(\frac{1}{\epsilon}\right)}}}}$, where $\epsilon\leq \epsilon_n$ measures how different the two distributions can be, the tower contains $n-2$ $2$'s and the constants in the $\Theta$ notation are allowed to depend on $n$. This result has an important consequence in game theory: It shows that even though you can define extensive form games that cannot be implemented on players who can tell the time, you can have implementations that approximate the game arbitrarily well.

  \section{Introduction}

  A group of gamblers are standing in a circle so that each gambler can see all the other gamblers' foreheads but not their own, and the gamblers are not allowed to communicate. A dealer then sticks one natural number on each gamblers' forehead, and ask each gambler to choose two numbers $i$ and $j$ (the other gambler do not learn the numbers). If the gambler had the $i$'th smallest number he wins $1$ dollar from the dealer, if he has the $j$'th smallest he losses $1$ dollar to the dealer. Does she have a randomised strategy that ensures that in expectation she does not lose money? 
  
  A randomised dealer strategy is just a distribution on $(X_1,\dots,X_n)$ where $n$ is the number of gamblers and $X_1<X_2<\dots<X_n$ are random variables taking integers. We show that if such a strategy ensures that the dealer does not lose in expectation, then for any $k$ any two $k$-subsets of $\{X_1,\dots ,X_n\}$ will have the same distribution. This has an important consequence in game theory. It is well known that some extensive form games, for example the absent minded driver game~\cite{PR97}, cannot be implement on agents with perfect memory. In order to avoid such games, we often require games to have \emph{perfect recall}, that is, the players remember which information sets they have previously been in, and what choices they made. However, there are also games with perfect recall that cannot be implemented on agents with a sense of time, because the information set do not respect any ordering~\cite{Time}. If we can find a distribution on $(X_1,\dots, X_n)$ such that learning the values of some subset $\{X_{i_1},\dots,X_{i_k}\}$ only tells you the \emph{cardinality} $k$ of the subset, but does not give you any information about $i_1,\dots, i_k$, then we can use this to timing any game with perfect recall and at most $n$ nodes in each history. We simply play the root at time $X_1$, then the next node at time $X_2$ and so on. The agents would learn some times $X_{i_j}$, but the only information the agents would get from this, is the number of nodes he has had, and agents alway know this in any game with prefect recall. 
  
  However, we show that no such distribution of $(X_1,\dots,X_n)$ exists. In the other direction, we show that the dealer can ensure that she only lose $\epsilon$ dollars in expectation for any $\epsilon>0$. Unfortunately, to do that, she will need to numbers as large as $2^{2^{\dots ^{2^{\Theta(\frac{1}{\epsilon})}}}}$, when $\epsilon$ is sufficiently small. Here the tower contains $n-2$ $2$'s, and the constants in the $\Theta$ notation are allowed to depend on $n$. This result implies that any extensive form game can be approximated arbitrarily well by games where the players know the time at any node. The rest of this paper is about the numbers-on-foreheads game and related problems. For the game theoretical implications of the paper, see~\cite{Time}.
   
   To analyse the problem, we define a slightly different game. First we let the Dealer choose some distribution of $(X_1,\dots, X_n)$ such that $X_1,\dots ,X_n$ is a strictly increasing sequence of natural numbers. Then a gambler chooses two numbers $i$ and $j$. 
   Then $X=(X_1,\dots,X_n)$ is chosen randomly using the distribution given by the dealer, and an independent coin flip decides if the gambler is given $X_{-i}:=(X_1,\dots ,X_{i-1},X_{i+1},\dots, X_n)$ or $X_{-j}$. 
   The gambler then bets $1$ dollar on which of the two he was given. The expected utility of the gambler in this within a factor $n$ of the expected utility for the gambler of the first game. 
   Furthermore, this game is easier to analyse: Gambler cannot win more than $\epsilon$ is expectation if and only if any two $X_{-i}$ and $X_{-j}$ have total variation distance at most $\epsilon$.

   \subsection{Notation}
     For real numbers $x,y$ we define $[x]=\{i\in \N| i\leq x\}$, and $[x,\dots ,y]=\{i\in\N| x\leq i\leq y\}$. We let $\log$ denote the base $2$ logarithm, and let $\exp_2$ denote the function $x\mapsto 2^x$. Hence, $\exp_2^n(x)$ denotes iteration of $\exp_2$, so $\exp_2^n(x)=2^{2^{\dots^{2^x}}}$ where the tower contains $n$ $2$'s. Similarly, $\log^n$ denotes iteration of $\log$.

  

  The \emph{total variation distance} (also called \emph{statistical distance}) between two discrete random variables $X_1$ and $X_2$ is given by
  \begin{align*}\delta(X_1,X_2)&=\sum_{x}\max(\Pr(X_1=x)-\Pr(X_2=x),0)\\
  &=\sum_x\left|\frac{\Pr(X_1=x)-\Pr(X_2=x)}{2}\right|
  \end{align*}
  where the sums are over all possible values of $X_1$ and $X_2$. This measure is symmetric in $X_1$ and $X_2$.
  
  We say that $X_1$ and $X_2$ are \emph{$\epsilon$-indistinguishable} if $\delta(X_1,X_2)\leq \epsilon$. 
  Given a tuple $X=(X_1,\dots, X_n)$ we let $X_{-i}$ denote $(X_1,\dots, X_{i-1},X_{i+1},\dots,X_n)$. We say that $(X_1,\dots, X_n)$ \emph{has $\epsilon$-indistinguishable $m$-subsets} if for any two subsets $\{i_1,\dots,i_m\},\{j_1,\dots j_m\}\subset [n]$ of size $m$, the two random sets $\{X_{i_1},\dots ,X_{i_n}\}$ and $\{X_{j_1},\dots, X_{j_n}\}$ are $\epsilon$-indistingushable. We slightly abuse notation and say that $(X_1,\dots, X_n)$ \emph{has $\epsilon$-indistinguishable subsets} if for all $m<n$ it has $\epsilon$-indistinguishable $m$-subsets.
     
We will assume that the reader knows some game theory, and knows the minimax theorem. For an introduction to game theoretical concepts see~\cite{AGT07}. 
    
 \subsection{Paper outline}

In the next section we show that the total variation distance can be used to measure the advantage you get from side information when entering an otherwise fair bet, and in Section \ref{sec:props} we show some properties of total variation distance that we will need later. Then we ask, can we find a distribution of $(X_1,\dots, X_n)$ with $\epsilon$-indistinguishable subsets, and if so, how large values does $X_n$ need to take. In Section \ref{sec:construction} we give a recursive construction of such a tuple for any $n$ and $\epsilon>0$. By the results in Section \ref{sec:gambling} this corresponds to good randomised strategy for Dealer. In Section \ref{sec:lower bound} we show a lower bound on how large numbers $X_n$ needs to take, and hence how large numbers Dealer needs to use, to ensure that she only losses $\epsilon$ in expectation. 

\section{Relation between gambling games and total variation distance}\label{sec:gambling}

In this section we will show that several similar problems are the same up to a constant factor (depending on $n$) on $\epsilon$. First we show that the total variation distance can be seen as an measure of the advantage in a betting game.

\begin{prop}[Total variation as betting advantage]\label{prop:tvdasbetadv}
For random variables $Y_1$ and $Y_2$ we define a one-player game:
\begin{itemize}
\item $y_1$ and $y_2$ is chosen according to the distribution of $Y_1$ and $Y_2$.
\item Independently $i$ is chosen uniformly on $\{1,2\}$.
\item The player learns $y_i$ and makes a guess about $i$.
\item If correct he gets utility $1$ if wrong he gets utility $-1$.  
\end{itemize}
The expected utility the player gets using the optimal strategy is $\delta(Y_1,Y_2)$
\end{prop}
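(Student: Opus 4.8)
The plan is to reduce the whole analysis to a pointwise optimization over the single value the player observes. First I would note that since the other sample $y_{3-i}$ is never revealed, the joint distribution of $Y_1,Y_2$ is irrelevant and only their marginals matter; moreover a pure strategy for the player is nothing more than a function $g$ assigning to each observable value $x$ a guess $g(x)\in\{1,2\}$, and a general randomized strategy is a mixture of such $g$'s, so by linearity of expectation it suffices to optimize over pure strategies.

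Next I would compute the expected utility of a fixed pure strategy $g$. Conditioning on the value $x$ the player sees, the event ``$i=1$ and the player sees $x$'' has probability $\tfrac12\Pr(Y_1=x)$, and ``$i=2$ and the player sees $x$'' has probability $\tfrac12\Pr(Y_2=x)$. Adding up the signed contributions ($+1$ when $g$ guesses correctly, $-1$ otherwise) gives
\[
\E[\text{utility}] \;=\; \sum_{x:\,g(x)=1}\tfrac12\bigl(\Pr(Y_1=x)-\Pr(Y_2=x)\bigr)\;+\;\sum_{x:\,g(x)=2}\tfrac12\bigl(\Pr(Y_2=x)-\Pr(Y_1=x)\bigr).
\]
Each term of this sum can be chosen independently, and the $x$-th term is at most $\tfrac12\lvert\Pr(Y_1=x)-\Pr(Y_2=x)\rvert$, with equality exactly when $g(x)$ picks the index $i$ maximizing $\Pr(Y_i=x)$ (ties broken arbitrarily). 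Hence every pure strategy has expected utility at most $\sum_x\tfrac12\lvert\Pr(Y_1=x)-\Pr(Y_2=x)\rvert$, which is precisely $\delta(Y_1,Y_2)$ by the second displayed formula for total variation distance, and the ``guess the more likely source'' strategy attains this bound. Since a randomized strategy's payoff is a convex combination of pure-strategy payoffs, it cannot exceed $\delta(Y_1,Y_2)$ either, so the optimum equals $\delta(Y_1,Y_2)$ and is attained.

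There is essentially no hard step; the only things to be careful about are the bookkeeping of the factors $\tfrac12$ coming from the uniform choice of $i$, and the observation that the two given formulas for $\delta$ agree because the positive and negative parts of $\Pr(Y_1=x)-\Pr(Y_2=x)$ have equal total mass (both distributions sum to $1$). One could alternatively phrase the argument via the minimax theorem as elsewhere in the paper, but since the player moves last against no genuine opponent, the direct pointwise computation is cleaner.
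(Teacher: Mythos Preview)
Your proof is correct and follows essentially the same approach as the paper: reduce to deterministic strategies, compute the contribution of each observed value $x$ as $\tfrac12\bigl(\Pr(Y_{g(x)}=x)-\Pr(Y_{3-g(x)}=x)\bigr)$, and optimize pointwise to obtain $\sum_x \tfrac12\lvert\Pr(Y_1=x)-\Pr(Y_2=x)\rvert=\delta(Y_1,Y_2)$. Your additional remarks (that only the marginals matter, and that randomized strategies cannot beat the pure optimum by convexity) are correct elaborations but not needed beyond what the paper does.
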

\begin{proof}
As this is a one-player game, the optimal strategy is deterministic. A deterministic strategy is a function $g$ that for each possible value $y$ of $Y_1$ or $Y_2$ gives the value in $\{1,2\}$ that the player should guess. If $g(y)=1$ the contribution of $y$ to the expected output when gambler use strategy $g$ is
\[\Pr(Y_1=y)\Pr(I=1)-\Pr(Y_2=y)\Pr(I=2)=\frac{\Pr(Y_1=y)-\Pr(Y_2=y)}{2}.\]
Similarly, if $g(y)=2$ then $y$'s contribution to the expected outcome is
\[\frac{\Pr(Y_2=y)-\Pr(Y_1=y)}{2}.\]
Clearly, the best strategy is to choose the positive one of these two, in which case the contribution of $y$ is
\[\left|\frac{\Pr(Y_1=y)-\Pr(Y_2=y)}{2}\right|.\]
Summing over all $y$'s gives $\delta(Y_1,Y_2)$.
\end{proof}

We will now define two games between Dealer and Gambler, and show that they are related.
Given $n$ and $N$ we define Game 1: 
\begin{itemize}
\item Dealer chooses some natural numbers $1\leq x_1<\dots<x_n\leq N$. 
\item A number $i_0$ is chosen uniformly at random from $[n]$.
\item Gambler learns $x_{-i_0}$.
\item Gambler chooses two numbers $i$ and $j$.
\item If $i=i_0$ the Gambler wins $1$ dollar from Dealer, if $j=i_0$ he loses $1$ dollar to Dealer.
\end{itemize}
This is just the game from the introduction seen from the perspective of a single gambler. 

Game 1 is a two-player zero-sum games, and because we only allow the dealer to choose natural numbers between $1$ and $N$, each of these players only have finitely many pure strategies. By the minimax theorem such a game has a value $v_1$ such that
\begin{itemize}
\item Dealer has a probabilistic strategy, that is a distribution on $(X_1,\dots, X_n)$, such that no matter what strategy Gambler uses he cannot earn more than $v_1$ dollars in expectation.
\item Gambler has a probabilistic strategy such that no matter which numbers Dealer chooses, Gambler will win at least $v_1$ dollars in expectation.
\end{itemize}

We want to figure out how this value changes with $n$ and $N$. In order to do this we define Game 2, which is less natural but easier to analyse:
\begin{itemize}
\item Dealer chooses some natural numbers $1\leq x_1<\dots<x_n\leq N$. 
\item Gambler chooses two numbers $i_1$ and $i_2$.
\item A fair coin is flipped to decide if $K=1$ or $2$.
\item Gambler learns $X_{-i_K}$ and guesses if $K=1$ or $2$.
\item If Gambler is correct he wins $1$ dollar from Dealer, if he is wrong he loses $1$ dollar from Dealer.
\end{itemize}
This is also a zero-sum game with finitely many pure strategies, so the minimax theorem says that Game 2 also have a value $v_2$. We want to show that $v_1$ and $v_2$ are within a factor $n$ of each other.

\begin{prop}\label{prop:v1boundsv2}
$\frac{2}{n}v_2\leq v_1$
\end{prop}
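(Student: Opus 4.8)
The plan is to use the minimax theorem to reduce the inequality to a statement about a single, arbitrary Dealer strategy, and then convert a good Gambler response in Game 2 into a good Gambler response in Game 1. Since a mixed Dealer strategy in either game is just a distribution $D$ on $(X_1,\dots,X_n)$, the minimax theorem gives $v_1=\min_D(\text{best Gambler payoff in Game 1 against }D)$ and $v_2=\min_D(\text{best Gambler payoff in Game 2 against }D)$. By Proposition~\ref{prop:tvdasbetadv}, for a fixed $D$ and a fixed Gambler choice $(i_1,i_2)$ in Game 2 the best continuation is worth $\delta(X_{-i_1},X_{-i_2})$; randomising over $(i_1,i_2)$ only produces a convex combination of such numbers, so the best Gambler payoff against $D$ in Game 2 is $v_D:=\max_{i_1,i_2}\delta(X_{-i_1},X_{-i_2})$, whence $v_2=\min_D v_D$. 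It therefore suffices to exhibit, for every $D$, a Gambler response in Game 1 worth at least $\tfrac2n v_D$.

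Fix $D$ and pick $a,b$ with $\delta(X_{-a},X_{-b})=v_D$. I would have Gambler, upon observing the $(n-1)$-tuple $y=x_{-i_0}$, compare $\Pr(X_{-a}=y)$ with $\Pr(X_{-b}=y)$ and play $(i,j)=(a,b)$ if $\Pr(X_{-a}=y)>\Pr(X_{-b}=y)$ and $(i,j)=(b,a)$ otherwise. The computation rests on the Bayes identity $\Pr(\text{observe }y,\ i_0=k)=\tfrac1n\Pr(X_{-k}=y)$, valid because $i_0$ is uniform on $[n]$ and independent of $X$. With it, the contribution to Gambler's expected payoff from observations $y$ with $\Pr(X_{-a}=y)>\Pr(X_{-b}=y)$ is $\tfrac1n\sum_{y:\,\Pr(X_{-a}=y)>\Pr(X_{-b}=y)}\bigl(\Pr(X_{-a}=y)-\Pr(X_{-b}=y)\bigr)=\tfrac1n\delta(X_{-a},X_{-b})$, and symmetrically the complementary observations contribute $\tfrac1n\delta(X_{-b},X_{-a})$. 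Since $\delta$ is symmetric these add to $\tfrac2n v_D$. (If $v_D=0$ there is nothing to prove, as Gambler can always secure $0$ in Game 1 by taking $i=j$; and $v_D>0$ forces $a\neq b$, so the move is legal.) Combining with the first paragraph, the best Gambler payoff in Game 1 against $D$ is $\geq\tfrac2n v_D\geq\tfrac2n v_2$ for every $D$, and taking the minimum over $D$ yields $v_1\geq\tfrac2n v_2$.

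The only non-mechanical point is spotting this Gambler strategy and seeing why it extracts advantage simultaneously on the ``looks like $X_{-a}$'' side and the ``looks like $X_{-b}$'' side; once the identity $\Pr(\text{observe }y,\ i_0=k)=\tfrac1n\Pr(X_{-k}=y)$ is in hand, the rest is routine and the factor $\tfrac2n$ (a gain of $2$ from the two-sided bet, a loss of $n$ from $i_0$ being spread over $[n]$) drops out directly. One should also verify the harmless edge cases and the minimax reduction, but neither is delicate.
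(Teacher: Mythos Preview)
Your argument is correct and is essentially the paper's own proof viewed from the other side of the minimax theorem: the paper fixes an optimal Gambler strategy for Game~2 and black-box converts it to a Game~1 strategy (noting that $i_0\in\{i_1,i_2\}$ with probability $\tfrac{2}{n}$), whereas you fix the Dealer distribution $D$, unwrap the optimal Game~2 response as the pair $(a,b)$ maximising $\delta(X_{-a},X_{-b})$, and compute the resulting Game~1 payoff explicitly via Bayes. The core idea---commit to two indices and bet on which was deleted---is identical, and your explicit computation of $\tfrac{2}{n}v_D$ matches the paper's conditional-probability reasoning.
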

\begin{proof}
Suppose that Gambler has a strategy that ensures an expected outcome of $v_2$ in Game $2$. To show the statement we construct a strategy that ensures expected outcome of $\frac{2}{n}v_2$ in Game $1$. 

Before Game $1$ starts, the Gambler chooses two numbers $i_1$ and $i_2$ using his strategy for Game $2$. We then play Game $1$: Dealer chooses numbers $x_1<\dots<x_n$ and $i_0$ is chosen uniformly from $[n]$. Gambler sees $x_{i_0}$. The Gambler still plays as if he was playing Game $2$ and had chosen $i_1$ and $i_2$. If he would have guessed $K=1$ he sets $i=i_1, j=i_2$ and if he would have guessed $2$ he sets $i=i_2,j=i_1$. 

As Gambler choice of $i_1$ and $i_2$ cannot affect $i_0$ and $X$, there is probability $\frac{n}{2}$ that $i_0\in\{i_1,i_2\}$. Given that this happens the expected outcome is exactly the same as the expected outcome in Game $2$. Given that $i_0\not\in\{i_1,i_2\}$ the Gambler will neither lose nor win money. Thus, the expected outcome of the strategy is $\frac{2}{n}v_2$. 
\end{proof}

\begin{prop}\label{prop:v2boundsv1}
$\frac{v_1}{n-1}\leq v_2$
\end{prop}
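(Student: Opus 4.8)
The goal is the converse direction: from a good Gambler strategy in Game 1 we want to extract a good Gambler strategy in Game 2, losing only a factor $n-1$. So suppose Gambler has a strategy in Game 1 guaranteeing expected outcome $v_1$. In Game 1, after seeing $x_{-i_0}$, Gambler outputs a pair $(i,j)$; he wins if $i = i_0$, loses if $j = i_0$. The natural plan is to simulate Game 1 inside Game 2 and translate Gambler's guess back.

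**Main idea.** In Game 2, Gambler must first commit to $i_1, i_2$ *before* the coin flip, then after seeing $X_{-i_K}$ guess $K$. The difficulty is that in Game 1, the index $i_0$ that gets removed is chosen by the referee uniformly, whereas in Game 2 Gambler picks the two candidate indices. So I would have the Game-2 Gambler do the following: first, simulate nothing — instead, after seeing $X_{-i_K}$ (which is an $(n-1)$-tuple), feed this tuple to the Game-1 strategy as if it were $x_{-i_0}$ for some unknown $i_0$. The Game-1 strategy returns a pair $(i,j)$. Gambler then guesses $K$ by checking which of $i_1, i_2$ is "more consistent" with the returned pair: roughly, if $i = i_1$ or $j = i_2$ points toward $i_0 = i_1$, guess $K = 2$ (since $K=2$ means $i_2$ was removed, wait — need to be careful with the direction), and symmetrically otherwise. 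One must set up the dictionary so that when $i_0 = i_K$ actually holds, a Game-1 win ($i = i_0$) translates into a correct Game-2 guess and a Game-1 loss ($j = i_0$) into an incorrect guess.

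**The averaging step.** The key quantitative point: fix the Dealer's distribution on $(X_1,\dots,X_n)$ in Game 2. Consider running the Game-2 Gambler's strategy but *also* averaging over all $\binom{n}{2}$ unordered pairs $\{i_1,i_2\}$ uniformly — or better, reduce to the case where we can relate the removed index to a uniform choice. The cleanest route: show that any Game-1 strategy, when the Dealer fixes a distribution, has its value expressible via the $\delta(X_{-i}, X_{-j})$ quantities (using Proposition \ref{prop:tvdasbetadv}), and symmetrically for Game 2; then bound one sum by the other. Concretely, by Proposition \ref{prop:tvdasbetadv}, the best Gambler can do in Game 2 against a fixed Dealer distribution, having committed to $\{i_1,i_2\}$, is exactly $\delta(X_{-i_1}, X_{-i_2})$. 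In Game 1 against the same fixed distribution, Gambler sees $X_{-i_0}$ with $i_0$ uniform and outputs $(i,j)$; his expected payoff is at most $\frac{1}{n}\sum_{i_0} \Pr[\text{guess } i = i_0 \mid X_{-i_0}] - \Pr[\text{guess } j = i_0 \mid X_{-i_0}]$, which after a union-bound / triangle-inequality manipulation is bounded by $\max_{i,j} \delta(X_{-i},X_{-j})$ times something like $\frac{n-1}{n}$ — because the "win" event and the "lose" event involve distinct removed-index hypotheses, and the contribution telescopes across the at most $n-1$ relevant pairs.

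**Expected main obstacle.** The delicate part is the combinatorial bookkeeping that turns "Gambler in Game 1 outputs a pair $(i,j)$ and the referee's $i_0$ is uniform" into a bound of the form (factor $n-1$) $\times$ (best two-index distinguishing advantage). Proposition \ref{prop:v1boundsv2} was easy because there Gambler *chose* the pair and the referee's uniform $i_0$ landed in it with probability $\frac{2}{n}$, contributing a clean dilution factor. Here the logic runs backwards and the pair $(i,j)$ can depend on the observed tuple, so one cannot simply condition on a fixed pair; I expect to need to fix the Dealer's optimal Game-1 distribution, invoke the minimax theorem to realize $v_1$, then argue that against this distribution *some* pair $\{i,j\}$ already has $\delta(X_{-i},X_{-j}) \ge \frac{v_1}{n-1}$ (a pigeonhole over the $n-1$ "transitions" in an optimal Gambler response, or over chains $X_{-1},X_{-2},\dots$), and finally hand that pair to the Game-2 Gambler to realize payoff $\delta(X_{-i},X_{-j}) \ge \frac{v_1}{n-1}$. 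Verifying that the pigeonhole genuinely yields $n-1$ and not $n$ or $\binom{n}{2}$ in the denominator is where the care is needed.
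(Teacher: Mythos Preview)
Your proposal floats several ideas without committing to one, and the one you ultimately lean on in the ``main obstacle'' paragraph has a genuine logical error: you fix the \emph{Game-1}-optimal Dealer distribution and try to show that against it some pair has $\delta(X_{-i},X_{-j})\ge \frac{v_1}{n-1}$. But this is the wrong distribution. To conclude $v_2\ge \frac{v_1}{n-1}$ you must show that against \emph{every} Dealer distribution (equivalently, against the Game-2-optimal one) the Game-2 Gambler can earn $\frac{v_1}{n-1}$. Establishing a large $\delta$ only against the Game-1-optimal Dealer says nothing about $v_2$, since the two games need not share an optimal Dealer.

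The idea you mention in passing --- ``averaging over all $\binom{n}{2}$ unordered pairs $\{i_1,i_2\}$ uniformly'' --- is exactly what the paper does, and you abandon it too quickly. The paper's execution is this: introduce an intermediate Game~1.5 identical to Game~2 except that $i_0$ is uniform on $[n]$ (not on $\{i_1,i_2\}$), with zero payoff when $i_0\notin\{i_1,i_2\}$. Gambler first draws $\{i_1,i_2\}$ uniformly from all $2$-subsets of $[n]$, then runs the Game-1 strategy on $X_{-i_0}$ to produce $(i,j)$. If $\{i,j\}=\{i_1,i_2\}$ (probability $\frac{2}{n(n-1)}$, independent of everything else) he bets on $i$; otherwise he bets uniformly at random. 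Conditional on the match, the Game-1.5 payoff equals the Game-1 payoff, so the expected Game-1.5 payoff is $\frac{2v_1}{n(n-1)}$. Finally, Game~2 is Game~1.5 conditioned on $i_0\in\{i_1,i_2\}$ (probability $\frac{2}{n}$), and the payoff is zero off that event, yielding Game-2 value $\frac{v_1}{n-1}$. This cleanly resolves the ``commit to $i_1,i_2$ before seeing anything'' issue that your first paragraph leaves open.

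Your $\delta$-based route can also be salvaged, once you fix an \emph{arbitrary} Dealer distribution $D$ rather than the Game-1-optimal one. Writing $f_k(x)=\mathbf{1}[i(x)=k]-\mathbf{1}[j(x)=k]$, the Game-1 payoff against $D$ is $\frac{1}{n}\sum_k \E f_k(X_{-k})$; since $\sum_k f_k\equiv 0$, this equals $\frac{1}{n}\sum_{k}(\E f_k(X_{-k})-\E f_k(X_{-1}))$, and each summand is at most $2\,\delta(X_{-k},X_{-1})$. This gives $v_1(D)\le \frac{2(n-1)}{n}\max_{i,j}\delta(X_{-i},X_{-j})$, hence $v_2\ge \frac{n}{2(n-1)}\,v_1$, which is actually sharper than the paper's stated bound. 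So the pigeonhole you were worried about is fine --- the denominator really is of order $n-1$ --- but only after you correct which Dealer you are arguing against.
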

\begin{proof}
Suppose that Gambler has a strategy that ensures an expected outcome of $v_1$ in Game $1$. To show the statement we construct a strategy that ensure expected outcome of $\frac{v_1}{n-1}$ in Game $2$. To do this we define a Game $1.5$ where the Gambler starts by choosing $i_1$ and $i_2$ as in Game $2$, but he is given $X_{-i_0}$ where $i_0$ is uniformly distributed on $[n]$ independently from $X$. If $i_0\neq i_1,i_2$ no one wins anything. More formally:
\begin{itemize}
\item Dealer chooses some natural numbers $1\leq x_1<\dots<x_n\leq N$. 
\item Gambler chooses two numbers $i_1$ and $i_2$.
\item $i_0$ is chosen uniformly at random from $[n]$.
\item Gambler learns $X_{-i_0}$ and guesses if $i_0=i_1$ or $i_2$.
\item If $i_0\not\in\{i_1,i_2\}$ no money is transferred, otherwise
\item If Gambler guessed correct he wins $1$ dollar from Dealer, if he guessed wrong he loses $1$ dollar to Dealer.
\end{itemize} 

We convert the gambler strategy for Game $1$ to a Gambler strategy for Game $1.5$: Gambler first chooses $\{i_1,i_2\}$ uniformly from all size-$2$ subsets of $[n]$. Then when he sees $X_{-i_0}$ he considers which two number he would have chosen as $i$ and $j$ in Game $1$. If $\{i,j\}=\{i_1,i_2\}$ he makes his bet as in Game $1$. This case happens with probability $\frac{2}{n(n-1)}$, and given that it happens he has excepted outcome $v_1$. Otherwise, Gambler chooses the bet $i_0=i_1$ with probability $0.5$ and $i_0=i_2$ with probability $0.5$, to ensure that he has excepted outcome $0$ in this case. Thus, expected outcome using that strategy for Game $1.5$ is $\frac{2v_1}{n(n-1)}$.

We now use the same strategy for Game $2$. Only difference is that we know that $i_0\in\{i_1,i_2\}$, thus we get the expected outcome from game $1.5$ conditioned on $i_0\in\{i_1,i_2\}$. But if $i_0\not\in\{i_1,i_2\}$ the outcome of game $1.5$ is $0$, thus the entire contribution to the expected outcome of game $1.5$ comes from the case $i_0\in\{i_1,i_2\}$. This case happens with probability $\frac{2}{n}$, so the expected value of our strategy in Game $2$ is $\frac{2v_1}{n(n-1)}\frac{n}{2}=\frac{v_1}{n-1}$.
\end{proof}

\begin{prop}\label{prop:v_2=eps}
$v_2\leq \epsilon$ if and only if there exists $(X_1,\dots, X_n)$ with $1\leq X_1<\dots X_n\leq N$ with $\epsilon$-indistinguishable $n-1$-subsets.
\end{prop}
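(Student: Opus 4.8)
The plan is to evaluate $v_2$ as a minimax over Dealer's distributions and then read off the stated equivalence. First I would fix an arbitrary Dealer strategy, i.e.\ a distribution on $(X_1,\dots,X_n)$ with $1\le X_1<\dots<X_n\le N$, and also fix Gambler's choice of the pair $i_1,i_2$. What remains of Game~2 is then: a fair coin picks $K\in\{1,2\}$, Gambler observes $X_{-i_K}$, and Gambler guesses $K$. This is literally the one-player guessing game of Proposition~\ref{prop:tvdasbetadv} with $Y_1=X_{-i_1}$ and $Y_2=X_{-i_2}$ (only the marginal laws of the two tuples enter there, so it does not matter that here they are coupled through a common $X$). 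Hence, once Gambler has committed to $i_1,i_2$, his optimal continuation has expected utility exactly $\delta(X_{-i_1},X_{-i_2})$. Moreover, since $X_1,\dots,X_n$ is strictly increasing with probability $1$, the ordered tuple $X_{-i}$ and the unordered set $\{X_k:k\ne i\}$ determine each other, so $\delta(X_{-i_1},X_{-i_2})$ is exactly the total variation distance between the two random $(n-1)$-subsets appearing in the definition of $\epsilon$-indistinguishable subsets.

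Next I would invoke the minimax theorem. Game~2 has finitely many pure strategies for each player -- Dealer has $\binom{N}{n}$ of them, and a pure Gambler strategy is a choice of $(i_1,i_2)$ together with a guessing function, of which there are again only finitely many -- so $v_2$ is attained, and it equals the minimum over Dealer distributions $D$ of the largest expected utility Gambler can secure against $D$ by a pure reply. By the previous paragraph such a best reply picks the pair maximising $\delta(X_{-i_1},X_{-i_2})$ and then guesses optimally, which gives
\[
v_2=\min_{D}\;\max_{i_1,i_2\in[n]}\delta(X_{-i_1},X_{-i_2}),
\]
where $D$ ranges over all distributions of $(X_1,\dots,X_n)$ supported on strictly increasing tuples in $[1,N]$.

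The equivalence now drops out of this formula. If $v_2\le\epsilon$, then since the minimum is attained there is a distribution $D$ with $\max_{i_1,i_2}\delta(X_{-i_1},X_{-i_2})\le\epsilon$; because the size-$(n-1)$ subsets of $[n]$ are precisely the complements of singletons (and the terms with $i_1=i_2$ are $0$), this is exactly the statement that $(X_1,\dots,X_n)$ has $\epsilon$-indistinguishable $(n-1)$-subsets. Conversely, any such $(X_1,\dots,X_n)$ is a Dealer strategy witnessing $\max_{i_1,i_2}\delta(X_{-i_1},X_{-i_2})\le\epsilon$, so $v_2\le\epsilon$ by the displayed formula -- here only the trivial direction, that the value is at most what a specific Dealer strategy guarantees, is needed.

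I expect the one real subtlety to be the reduction in the first paragraph: one must notice that after Gambler fixes $i_1,i_2$ the remainder of Game~2 is verbatim an instance of Proposition~\ref{prop:tvdasbetadv}, so no new analysis of optimal guessing is required, and one must not overlook the harmless identification of the tuple $X_{-i}$ with the set it determines, which is what makes the phrase ``$\epsilon$-indistinguishable $(n-1)$-subsets'' applicable. Everything else is a routine application of the minimax theorem to a finite zero-sum game.
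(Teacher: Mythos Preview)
Your argument is correct and follows essentially the same route as the paper: both reduce Game~2 (after fixing $i_1,i_2$) to the guessing game of Proposition~\ref{prop:tvdasbetadv} and then use the minimax theorem for finite zero-sum games. You are somewhat more explicit in first deriving the closed formula $v_2=\min_D\max_{i_1,i_2}\delta(X_{-i_1},X_{-i_2})$ and then reading off both directions, whereas the paper argues each implication directly, but the substance is the same.
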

\begin{proof}
$\Rightarrow$: Assume that $v_2\leq\epsilon$. Then there exists a mixed strategy for Dealer that ensures that Gambler wins at most $\epsilon$. This mixed strategy is a distribution of $(X_1,\dots,X_n)$ and by assumption, no matter which $i,j$ Gambler choses, he cannot win more than $\epsilon$ in expectation. By Proposition \ref{prop:tvdasbetadv} this mean that $\delta(X_{-i},X_{-j})\leq \epsilon$ for all $i,j$.

$\Leftarrow$: Assume that there exists $(X_1,\dots,X_n)$ as in the statement. Then the Dealer can use this distribution to choose her number in Game $2$ and by Proposition \ref{prop:tvdasbetadv} the Gambler cannot win more than $\epsilon$ in expectation with any strategy.
\end{proof}

\begin{theo}\label{theo:equivalence}
Fix parameters $n\geq 2$ and $N$.
Let $v_1$ be the value of Game $1$ and let $\epsilon_0$ be the supremum over all values $\epsilon$ such that there exists a distribution of $X=(X_1,\dots,X_n)$ where $1\leq X_1<X_2<\dots<X_n\leq N$ are all integers and $X$ has $\epsilon$-indistinguishable $n-1$-subsets. Then $\frac{2}{n}\epsilon_0 \leq v_1\leq (n-1)\epsilon_0 $
\end{theo}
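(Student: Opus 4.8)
The plan is to show that $\epsilon_0$ equals the value $v_2$ of Game~2, after which both inequalities fall out of Propositions~\ref{prop:v1boundsv2} and~\ref{prop:v2boundsv1}. So the first step is the identification $\epsilon_0=v_2$. By Proposition~\ref{prop:v_2=eps}, a distribution of $(X_1,\dots,X_n)$ with $1\le X_1<\dots<X_n\le N$ and $\epsilon$-indistinguishable $(n-1)$-subsets exists if and only if $v_2\le\epsilon$, so the set of $\epsilon$ we are optimising over in the definition of $\epsilon_0$ is precisely the ray $[v_2,\infty)$. The best (smallest) indistinguishability parameter achievable by such a distribution is therefore $v_2$, and it is attained: by the minimax theorem Dealer has an optimal strategy in Game~2, and by Proposition~\ref{prop:tvdasbetadv} that strategy is a distribution $X$ realising $\max_{i,j}\delta(X_{-i},X_{-j})=v_2$. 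Hence $\epsilon_0=v_2$.

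Given $\epsilon_0=v_2$, the theorem is just the conjunction of the two earlier propositions. Proposition~\ref{prop:v1boundsv2} gives $\tfrac{2}{n}v_2\le v_1$, which is $\tfrac{2}{n}\epsilon_0\le v_1$. Proposition~\ref{prop:v2boundsv1} gives $\tfrac{v_1}{n-1}\le v_2$, which rearranges to $v_1\le(n-1)v_2=(n-1)\epsilon_0$. Combining the two displays yields $\tfrac{2}{n}\epsilon_0\le v_1\le(n-1)\epsilon_0$.

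I do not expect a real obstacle here: the substance is already contained in Propositions~\ref{prop:v1boundsv2}, \ref{prop:v2boundsv1} and~\ref{prop:v_2=eps}, and this theorem merely records their combined effect. The one point worth a little care is the bookkeeping in the definition of $\epsilon_0$: ``having $\epsilon$-indistinguishable $(n-1)$-subsets'' is monotone in $\epsilon$, so the feasible set of $\epsilon$'s is an upward ray and the quantity of interest is its infimum, i.e.\ the best achievable $\epsilon$, which Proposition~\ref{prop:v_2=eps} identifies with $v_2$ and the minimax theorem certifies is attained. After that the rest is a single substitution.
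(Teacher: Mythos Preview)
Your proof is correct and follows exactly the same route as the paper: identify $\epsilon_0$ with $v_2$ via Proposition~\ref{prop:v_2=eps}, then invoke Propositions~\ref{prop:v1boundsv2} and~\ref{prop:v2boundsv1}. Your final paragraph is on point: the word ``supremum'' in the statement is a slip for ``infimum'' (the feasible $\epsilon$'s form the upward ray $[v_2,\infty)$), and you interpret it correctly.
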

\begin{proof}
By Proposition \ref{prop:v_2=eps} we have $v_2=\epsilon_0$. The theorem now follows from Proposition \ref{prop:v1boundsv2} and \ref{prop:v2boundsv1}. 
\end{proof}

All the above problems have only involved $n-1$-subsets on the $n$ numbers. The following proposition show that if all $n-1$-subsets look the same, then any two subsets of the same size looks the same.

\begin{prop}\label{prop:subsetfromnminus1}
Fix $n\in \N$ and $\epsilon>0$. If $(X_1,\dots , X_n)$ has $\epsilon$-indistinguishable neighbouring $n-1$-subsets, is has $n^2\epsilon$-indistinguishable subsets. 
\end{prop}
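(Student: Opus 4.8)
The plan is to reduce everything to the hypothesis, which I read as $\delta(X_{-i},X_{-(i+1)})\le\epsilon$ for every $i\in[n-1]$, using only two elementary facts about total variation distance: it satisfies the triangle inequality, and it does not increase when a deterministic function is applied to both arguments (the data-processing inequality). Since $X_1<\dots<X_n$, a random $m$-subset $\{X_{i_1},\dots,X_{i_m}\}$ carries the same information as the increasingly sorted tuple it forms, so I identify $m$-subsets with such tuples and measure total variation distance accordingly. Applying the triangle inequality along $X_{-i},X_{-(i+1)},\dots,X_{-j}$ already shows that any two $(n-1)$-subsets are $(n-1)\epsilon$-indistinguishable; the real work is to descend from $(n-1)$-subsets to arbitrary $m$-subsets with $m<n$.

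The engine is an elementary-move lemma: if $i\in[n-1]$, $A\subseteq[n]$, $|A|=m$, $i\in A$, $i+1\notin A$, and $A'=(A\setminus\{i\})\cup\{i+1\}$, then $\delta\bigl(\{X_a:a\in A\},\{X_a:a\in A'\}\bigr)\le\epsilon$. Indeed, because $X$ is increasing and $A\subseteq[n]\setminus\{i+1\}$, the random set $\{X_a:a\in A\}$ is the subtuple of $X_{-(i+1)}$ (the sorted $(n-1)$-tuple obtained by dropping $X_{i+1}$) formed by the coordinates whose positions inside the ordered set $[n]\setminus\{i+1\}$ are the ones occupied by $A$; write $P\subseteq[n-1]$ for that position set and $\pi_P$ for the corresponding coordinate-selection map, so $\{X_a:a\in A\}=\pi_P(X_{-(i+1)})$. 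A short computation comparing how deleting the index $i$ and how deleting the index $i+1$ renumber positions shows that the positions occupied by $A'$ inside $[n]\setminus\{i\}$ form the \emph{same} set $P$, hence $\{X_a:a\in A'\}=\pi_P(X_{-i})$. Data processing then gives $\delta(\pi_P(X_{-(i+1)}),\pi_P(X_{-i}))\le\delta(X_{-i},X_{-(i+1)})\le\epsilon$. The reversed move (with $i+1\in A$, $i\notin A$) is the same statement with $A$ and $A'$ exchanged.

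Now take any two $m$-subsets $A=\{a_1<\dots<a_m\}$ and $B=\{b_1<\dots<b_m\}$ and set $c_t=\max(a_t,b_t)$. From $a_{t+1}\ge a_t+1$ and $b_{t+1}\ge b_t+1$ one gets $c_{t+1}\ge c_t+1$, so $C=\{c_1<\dots<c_m\}$ is a legitimate $m$-subset, and $A$ and $B$ both lie coordinatewise below $C$. To move from $A$ to $C$, repeatedly take the largest index still strictly below its target in $C$ and raise it by $1$: at each stage that index is the largest deficient one, so its new value is not currently occupied and the step is an elementary move; the process terminates after exactly $\sum_t(c_t-a_t)$ steps. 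Doing the same from $B$ and reversing the resulting chain, we connect $A$ to $B$ by $\sum_t(c_t-a_t)+\sum_t(c_t-b_t)=\sum_t|a_t-b_t|\le m(n-1)\le(n-1)^2<n^2$ elementary moves, each changing the distribution by at most $\epsilon$ in total variation. The triangle inequality along the chain gives $\delta\bigl(\{X_a:a\in A\},\{X_b:b\in B\}\bigr)\le n^2\epsilon$, and the case $m=n-1$ is already handled above; since $m<n$ and the pair of $m$-subsets was arbitrary, $(X_1,\dots,X_n)$ has $n^2\epsilon$-indistinguishable subsets.

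The step I expect to be the real obstacle is the elementary-move lemma, in particular the position bookkeeping showing that deleting $i$ versus $i+1$ yields the identical map $\pi_P$: this succeeds precisely because the two deleted indices are adjacent — which is exactly where the hypothesis enters in its sharp form — and it is easy to get the index shifts slightly off. Everything after that (the detour through $C=(\max(a_t,b_t))_t$, the collision check along the path, and the two appeals to the triangle inequality and to data processing) is routine.
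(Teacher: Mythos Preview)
Your proof is correct and follows essentially the same approach as the paper: both reduce to an ``elementary-move'' (what the paper calls \emph{neighbouring} $m$-subsets) by applying a fixed coordinate-selection map to the two neighbouring $(n-1)$-tuples $X_{-i}$ and $X_{-(i+1)}$ and invoking data processing, then chain elementary moves with the triangle inequality. The only difference is cosmetic: the paper simply asserts that the neighbour graph on $m$-subsets has diameter at most $n^2$, whereas you construct an explicit path through the coordinatewise maximum $C=(\max(a_t,b_t))_t$ and thereby get the slightly sharper bound $\sum_t|a_t-b_t|\le (n-1)^2$.
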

\begin{proof}
Assume that $(X_1,\dots, X_n)$ has $\epsilon$-indistinguishable $n-1$-subsets. We say two subsets of $[n]$ of the same size are neighbours if only one of the numbers in them are different, and the different numbers only differ by one. That is, we can write them as $\{i_1,\dots, i_{k-1},i_k,i_{k+1},\dots i_m\}$ and $\{i_1,\dots i_{k-1},i_k+1,i_{k+1},\dots i_m\}$. We will assume that $i_1<i_2<\dots<i_m$. Let $f$ be the function given by
 $f(x_1,\dots, x_{n-1})=(x_{i_1},\dots x_{i_k-1},x_{i_k}, x_{i_{k+1}+1},\dots x_{i_m+1})$. 
 Now $f(X_1,\dots, X_{i_k-1},X_{i_k},X_{i_k+2},\dots , X_n)=(X_{i_1},\dots ,X_{i_{k-1}},X_{i_k},X_{i_{k+1}},\dots X_{i_m})$ 
 and $f(X_1,\dots, X_{i_k-1},X_{i_k+1},X_{i_k+2},\dots X_n)=(X_{i_1},\dots , X_{i_{k-1}},X_{i_k+1},X_{i_{k+1}},\dots, X_{i_m})$. 
The two string we use as argument in $f$ are neighbouring $n-1$ subsets of $[n]$, so by assumption they have statistical distance at most $\epsilon$. Thus by Proposition \ref{prop:dpi} the result must also have statistical distance at most $\epsilon$, so any two neighbouring $m$-sets have statistical distance at most $\epsilon$. In the graph where the nodes are $m$-sets and two nodes are connected if the $m$-sets are neighbours, the diameter is no more than $n^2$. By the triangle inequality (Proposition \ref{prop:triangle}) the statistical distance between any two $m$-sets is less than $n^2\epsilon$.
\end{proof}

\section{Properties of total variation distance}\label{sec:props}

 In this section we show some basic properties about the total variation distance. First the triangle inequality.

\begin{prop}[Triangle inequality]\label{prop:triangle}
For random variables $X_1,X_2,X_3$ we have
\[\delta(X_1,X_3)\leq \delta(X_1,X_2)+\delta(X_2,X_3)\]
\end{prop}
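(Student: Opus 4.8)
The plan is to prove the triangle inequality for total variation distance directly from the absolute-value formula $\delta(X_1,X_3)=\sum_x\left|\tfrac{\Pr(X_1=x)-\Pr(X_3=x)}{2}\right|$ given earlier. The main idea is just the pointwise triangle inequality for real numbers combined with linearity of summation; there is essentially no obstacle here, the only thing to be careful about is making sure the sums range over a common support.

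First I would fix $x$ ranging over the union of the possible values of $X_1$, $X_2$, and $X_3$ (setting the relevant probability to $0$ where a value is not in a variable's support, which does not change any of the three sums). Then for each such $x$ I would write
\[
\left|\frac{\Pr(X_1=x)-\Pr(X_3=x)}{2}\right|
=\left|\frac{\Pr(X_1=x)-\Pr(X_2=x)}{2}+\frac{\Pr(X_2=x)-\Pr(X_3=x)}{2}\right|
\le \left|\frac{\Pr(X_1=x)-\Pr(X_2=x)}{2}\right|+\left|\frac{\Pr(X_2=x)-\Pr(X_3=x)}{2}\right|,
\]
using the ordinary triangle inequality $|a+b|\le|a|+|b|$ on the reals.

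Then I would sum this inequality over all $x$. Since each sum converges (each is bounded by $1$), summation preserves the inequality, and the left-hand side becomes $\delta(X_1,X_3)$ while the right-hand side becomes $\delta(X_1,X_2)+\delta(X_2,X_3)$ by the definition of total variation distance. This gives the claimed bound. The only ``hard part'' is purely bookkeeping: ensuring the three summations are taken over the same index set so that the termwise inequality can be added up; once that is arranged the result is immediate.
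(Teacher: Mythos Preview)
Your proof is correct and is essentially the same argument as the paper's: a pointwise real-number triangle inequality followed by summation over $x$. The only cosmetic difference is that the paper uses the $\max(\cdot,0)$ formulation of $\delta$ while you use the equivalent $|\cdot|/2$ formulation.
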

\begin{proof}
\begin{align*}
\delta(X_1,X_3)=&\sum_x \max(\Pr(X_1=x)-\Pr(X_3=x),0)\\
\leq & \sum_x\max(\Pr(X_1=x)-\Pr(X_2=x),0)+\max(\Pr(X_2=x)-\Pr(X_3=x),0)\\
=&\delta(X_1,X_2)+\delta(X_2,X_3). 
\end{align*}
\end{proof}

\begin{prop}\label{prop:flag}
Two random variables $X_1$ and $X_2$ have total variation distance at most $\epsilon$ if and only if there exists a joint distribution $(X_1,S)$ where $S$ takes values in $\{0,1\}$, $\Pr(S=0)\leq \epsilon$ and for all $x$, $\Pr((X_1,S)=(x,1))\leq \Pr(X_2=x)$.
\end{prop}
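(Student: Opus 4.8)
The plan is to prove both directions by exhibiting, respectively, the explicit joint distribution and the explicit inequality, with the guiding intuition that $S=1$ should mark the part of the mass of $X_1$ that can be ``matched'' to $X_2$.

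For the ``only if'' direction, assume $\delta(X_1,X_2)\le\epsilon$ and simply define the joint distribution by
\[\Pr((X_1,S)=(x,1))=\min(\Pr(X_1=x),\Pr(X_2=x)),\qquad \Pr((X_1,S)=(x,0))=\max(\Pr(X_1=x)-\Pr(X_2=x),0).\]
First I would check this is a valid distribution whose $X_1$-marginal is correct, which is immediate since the two expressions sum to $\Pr(X_1=x)$. The condition $\Pr((X_1,S)=(x,1))\le\Pr(X_2=x)$ holds by definition of the minimum, and $\Pr(S=0)=\sum_x\max(\Pr(X_1=x)-\Pr(X_2=x),0)=\delta(X_1,X_2)\le\epsilon$ by the first displayed formula for $\delta$ in the excerpt.

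For the ``if'' direction, suppose such a joint distribution $(X_1,S)$ exists. Since $\Pr((X_1,S)=(x,1))\le\Pr(X_2=x)$ and $\Pr(X_1=x)=\Pr((X_1,S)=(x,0))+\Pr((X_1,S)=(x,1))$, I get $\Pr(X_1=x)-\Pr(X_2=x)\le\Pr((X_1,S)=(x,0))$; as the right-hand side is nonnegative this yields $\max(\Pr(X_1=x)-\Pr(X_2=x),0)\le\Pr((X_1,S)=(x,0))$. Summing over $x$ gives $\delta(X_1,X_2)\le\sum_x\Pr((X_1,S)=(x,0))=\Pr(S=0)\le\epsilon$.

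There is no real obstacle here; the only thing to get right is the choice of coupling, namely putting mass $\min(\Pr(X_1=x),\Pr(X_2=x))$ on $(x,1)$, after which both directions are one-line computations using the definition of total variation distance already recalled in the paper.
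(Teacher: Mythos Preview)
Your proof is correct and essentially the same as the paper's. The only cosmetic difference is that the paper specifies the coupling via the conditional $\Pr(S=1\mid X_1=x)=\min\bigl(1,\Pr(X_2=x)/\Pr(X_1=x)\bigr)$, which yields exactly your joint law $\Pr((X_1,S)=(x,1))=\min(\Pr(X_1=x),\Pr(X_2=x))$; the backward direction is likewise identical in substance.
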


\begin{proof}
First assume that $X_1$ and $X_2$ have total variation distance at most $\epsilon$. Then we define a distribution of $(X_1,S)$ by $\Pr(S=1|X_1=x)=\min\left(1,\frac{\Pr(X_2=x)}{\Pr(X_1=x)}\right)$. Now we have $\Pr(S=1|X_1=x)\leq \frac{\Pr(X_2=x)}{\Pr(X_1=x)}$ so $\Pr(S=1,X_1=x)\leq \Pr(X_2=x)$. We have
\begin{align*}
\Pr(S=0)=&\sum_x \Pr(S=0|X_1=x)\Pr(X_1=x)\\
=&\sum_x \max\left(0,1-\frac{\Pr(X_2=x)}{\Pr(X_1=x)}\right)\Pr(X_1=x)\\
=&\sum_x\max\left(0,\Pr(X_1=x)-\Pr(X_2=x)\right)\\
=&\delta(X_1,X_2).
\end{align*} 

To show the opposite implication, assume that there is a distribution of $(X_1,S)$ as in the statement. Then 
\begin{align*}
\delta(X_1,X_2)=&\sum_x \max(\Pr(X_1=x)-\Pr(X_2=x),0)\\
=& \sum_x \max(\Pr(X_1=x,S=0)+\Pr(X_1=x,S=1)-\Pr(X_2=x),0)\\
\leq &\sum_x\max(\Pr(X_1=x,S=0),0)\\
= &\sum_x\Pr(X_1=x,S=0)\\
=&\Pr(S=0)\\
\leq &\epsilon.
\end{align*}
\end{proof}

The next proposition is a data processing inequality for total variation distance: If you have two random variables, you cannot increase their distance by taking (random) functions of them.

\begin{prop}\label{prop:dpi}
If $X_1$ and $X_2$ have total variation distance $\epsilon$, $Y$ is a random variable independent from $X_1$ and $X_2$ and $f$ is a function. Then the total variation distance between $f(X_1,Y)$ and $f(X_2,Y)$ is at most $\epsilon$.
\end{prop}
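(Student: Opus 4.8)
The plan is to reduce the claim about random functions to the two easy special cases: first composition with a deterministic function, then the independence of the auxiliary variable $Y$, and to glue these together using the characterisation of total variation distance via a coupling flag from Proposition~\ref{prop:flag}.

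\textbf{Step 1: deterministic functions do not increase distance.} I would first observe that for a fixed (deterministic) function $g$, the variables $g(X_1)$ and $g(X_2)$ satisfy $\delta(g(X_1),g(X_2))\le\delta(X_1,X_2)$. This is immediate from the flag characterisation: take the coupling $(X_1,S)$ from Proposition~\ref{prop:flag} with $\Pr(S=0)\le\epsilon$ and $\Pr((X_1,S)=(x,1))\le\Pr(X_2=x)$ for all $x$. Then $(g(X_1),S)$ is a coupling with the same flag probability, and for each value $z$, $\Pr((g(X_1),S)=(z,1))=\sum_{x:g(x)=z}\Pr((X_1,S)=(x,1))\le\sum_{x:g(x)=z}\Pr(X_2=x)=\Pr(g(X_2)=z)$. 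Applying the converse direction of Proposition~\ref{prop:flag} gives $\delta(g(X_1),g(X_2))\le\epsilon$. (Alternatively one can prove this directly from the definition by grouping the sum over $x$ according to the fibres of $g$ and using the triangle inequality for $\max(\cdot,0)$, but the flag argument is cleaner.)

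\textbf{Step 2: adding an independent variable does not increase distance.} Next I would show that $\delta\big((X_1,Y),(X_2,Y)\big)\le\delta(X_1,X_2)$ when $Y$ is independent of both $X_i$. This is a direct computation: $\delta((X_1,Y),(X_2,Y))=\sum_{x,y}\max(\Pr(X_1=x)\Pr(Y=y)-\Pr(X_2=x)\Pr(Y=y),0)=\sum_y\Pr(Y=y)\sum_x\max(\Pr(X_1=x)-\Pr(X_2=x),0)=\sum_y\Pr(Y=y)\,\delta(X_1,X_2)=\delta(X_1,X_2)$, using $\Pr(Y=y)\ge0$ to pull it out of the $\max$.

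\textbf{Step 3: combine.} Now set $g=f$ viewed as a deterministic function of the pair. By Step 2, $(X_1,Y)$ and $(X_2,Y)$ have total variation distance at most $\epsilon$; by Step 1 applied with $g=f$, $f(X_1,Y)$ and $f(X_2,Y)$ have total variation distance at most $\epsilon$, which is exactly the claim. I do not expect any genuine obstacle here; the only thing to be careful about is bookkeeping the independence hypothesis correctly so that the joint distribution of $(X_i,Y)$ really does factor in Step 2, and making sure the argument covers $Y$ taking countably many values so the interchange of sums in Step 2 is legitimate (all terms are non-negative, so Tonelli applies and there is nothing subtle). If one wanted to avoid Proposition~\ref{prop:flag} entirely, the whole statement also follows from a one-line coupling argument — couple $X_1,X_2$ optimally, use the same $Y$ on both sides, and note $f(X_1,Y)\ne f(X_2,Y)$ forces $X_1\ne X_2$ — but I would present the flag-based version since Proposition~\ref{prop:flag} is already available.
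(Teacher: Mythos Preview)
Your proof is correct and follows essentially the same approach as the paper: both use the flag characterisation from Proposition~\ref{prop:flag} as the key tool. The only difference is cosmetic---the paper applies the flag argument in a single step (taking $S$ independent of $Y$ and directly verifying $\Pr(f(X_1,Y)=z,S=1)\le\Pr(f(X_2,Y)=z)$), whereas you decompose it into ``deterministic function'' and ``independent attachment'' sub-lemmas before combining.
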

\begin{proof}
By Proposition \ref{prop:flag} we can find a distribution of $(X_1,S)$ such that $\Pr(X_1=x,S=1)\leq \Pr(X_2=x)$ and $\Pr(S\neq 1)\leq \epsilon$. As $X_1$ is independent from $Y$, we can have $S$ independent from $Y$. Now $\Pr(f(X_1,Y)=z,S=1)\leq \Pr(f(X_2,Y)=z)$ and $\Pr(S\neq 1)\leq \epsilon$, so $f(X_1,Y)$ and $f(X_2,Y)$ have total variation distance at most $\epsilon$.
\end{proof}

\begin{prop}\label{prop:disjoint}
Let $X_1,\dots, X_n,Y_1,\dots ,Y_n,I$ be independent random variables with $X_i$ and $Y_i$ distributed on $\mathcal{X}_i$, and $I$ distributed on $[n]$. Let $X=X_I$ and $Y=Y_I$. We have
\[\delta(X,Y)\leq \sum_{i=1}^n\Pr(I=i)\delta(X_i,Y_i),\]
with equality if all the $\mathcal{X}_i$'s are pairwise disjoint.

\end{prop}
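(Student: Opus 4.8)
The plan is to reduce everything to the two marginal laws and then use a one-line subadditivity estimate. The crucial observation is that $\delta(X,Y)$ depends only on the marginal distributions of $X$ and $Y$, not on their joint distribution, so the fact that $X=X_I$ and $Y=Y_I$ are built using the \emph{same} copy of $I$ plays no role. Writing $p_i=\Pr(I=i)$ and using that $I$ is independent of the $X_i$ (resp.\ $Y_i$), the marginal of $X=X_I$ is the mixture $\Pr(X=z)=\sum_{i=1}^n p_i\Pr(X_i=z)$, and likewise $\Pr(Y=z)=\sum_{i=1}^n p_i\Pr(Y_i=z)$.

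Plugging these into the definition of total variation distance gives
\[
\delta(X,Y)=\sum_z\max\!\left(\sum_{i=1}^n p_i\bigl(\Pr(X_i=z)-\Pr(Y_i=z)\bigr),\,0\right).
\]
Next I would apply the elementary facts $\max(a+b,0)\le\max(a,0)+\max(b,0)$ and $\max(ca,0)=c\max(a,0)$ for $c\ge 0$, termwise in $i$, and then interchange the two (nonnegative) sums:
\begin{align*}
\delta(X,Y)&\le\sum_z\sum_{i=1}^n p_i\max\bigl(\Pr(X_i=z)-\Pr(Y_i=z),0\bigr)\\
&=\sum_{i=1}^n p_i\sum_z\max\bigl(\Pr(X_i=z)-\Pr(Y_i=z),0\bigr)\\
&=\sum_{i=1}^n p_i\,\delta(X_i,Y_i).
\end{align*}

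For the equality clause, suppose the $\mathcal{X}_i$ are pairwise disjoint. Then every value $z$ lies in at most one $\mathcal{X}_i$, so for each fixed $z$ all but at most one of the differences $\Pr(X_i=z)-\Pr(Y_i=z)$ vanish; hence the subadditivity step above is in fact an equality for every $z$, and the displayed chain collapses to a chain of equalities, yielding $\delta(X,Y)=\sum_i p_i\delta(X_i,Y_i)$.

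I do not expect a genuine obstacle here: this is essentially a direct computation. The only points that need a little care are (i) the (routine) justification that one may pass to marginals before estimating, and (ii) checking that disjoint supports force exactly one surviving summand, which is precisely what upgrades the inequality to equality. If one wanted a more structural argument one could instead invoke Proposition \ref{prop:flag} to obtain, for each $i$, a flag $S_i$ witnessing $\delta(X_i,Y_i)$ and then combine the $(X_i,S_i)$ through $I$ to build a flag for $(X,Y)$; but the direct estimate above is shorter and also gives the equality case for free.
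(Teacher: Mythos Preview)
Your argument is correct and is essentially the same as the paper's: both write the marginals of $X$ and $Y$ as $p_i$-mixtures and then apply subadditivity of $\max(\cdot,0)$ termwise. The only cosmetic difference is in the equality case, where the paper passes through the pair $(X,I)$ (noting that disjoint supports let one recover $I$ from $X$), whereas you observe directly that disjoint supports leave at most one nonzero summand so the subadditivity step is an equality; these are the same observation phrased two ways.
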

\begin{proof}
First assume that the $\mathcal{X}_i$'s are disjoint. Then the value of $I$ can be deduced from $X=X_I$ alone and from $Y=Y_I$ alone. This gives us 
\begin{align*}
\delta(X,Y)=& \delta((X,I),(Y,I))\\
=&\sum_{(x,i)}\max(\Pr(X_i=x,I=i)-\Pr(Y_i=x,I=i),0)\\
=&\sum_{(x,i)}\Pr(I=i)\max(\Pr(X_i=x|I=i)-\Pr(Y_i=x|I=i),0)\\
=&\sum_i\Pr(I=i)\delta(X_i,Y_i).
\end{align*}

Without the assumption that the $\mathcal{X}_i$'s are disjoint we have
\begin{align*}
\delta(X,Y)=&\sum_{x}\max\left(\sum_{i}\left(\Pr(X_i=x,I=i)-\Pr(Y_i=x,I=i)\right),0\right)\\
\leq & \sum_x\sum_i\max(\Pr(X_i=x,I=i)-\Pr(Y_i=x,I=i),0)\\
=& \sum_i\Pr(I=i)\delta(X_i,Y_i).
\end{align*}
\end{proof}

\section{Construction}\label{sec:construction}

   In this section we will construct random variables $(X_1,\dots, X_n)$ such that $X_{-i}$ and $X_{-j}$ are $\epsilon$-indistinguishable for all $i$ and $j$. First we consider the case $n=2$.
   
   \begin{prop}\label{prop:n2construction}
   Given $\epsilon$, there exists random variables $(X_1,X_2)$ such that $1\leq X_1<X_2\leq \lceil\frac{1}{\epsilon}\rceil+1$ are integers and $(X_1,X_2)$ has $\epsilon$-indistinguishable $1$-subsets.
   \end{prop}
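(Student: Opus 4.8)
The plan is to construct $(X_1,X_2)$ so that the marginals of $X_1$ and $X_2$ are as close as possible in total variation distance, subject to the hard constraint $X_1<X_2$. The natural idea is to make both $X_1$ and $X_2$ "almost uniform" on an interval of length about $\tfrac{1}{\epsilon}$: if $X_1$ were uniform on $[1,m]$ and $X_2$ were uniform on $[2,m+1]$, the two marginals would differ only in the endpoints, giving total variation distance roughly $\tfrac{1}{m}$. The obstacle is realising such marginals with a joint distribution that actually keeps $X_1<X_2$ with probability one.

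**The construction.**
Let $m=\lceil 1/\epsilon\rceil$. I would define the joint distribution as follows: pick $X_1$ uniformly at random from $[1,m]$, and then set $X_2=X_1+1$. This automatically guarantees $1\le X_1<X_2\le m+1$ and all values are integers, matching the claimed bound $\lceil 1/\epsilon\rceil+1$. Now $X_1$ is uniform on $\{1,\dots,m\}$ and $X_2$ is uniform on $\{2,\dots,m+1\}$. (Here the "$1$-subsets" of $\{X_1,X_2\}$ are just the singletons $\{X_1\}$ and $\{X_2\}$, so $\epsilon$-indistinguishable $1$-subsets means exactly $\delta(X_1,X_2)\le\epsilon$.)

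**Bounding the distance.**
It remains to check $\delta(X_1,X_2)\le\epsilon$. Using the first formula for $\delta$, I sum $\max(\Pr(X_1=x)-\Pr(X_2=x),0)$ over all $x$. For $x\in\{2,\dots,m\}$ both probabilities equal $\tfrac1m$, contributing $0$; for $x=1$ we get $\tfrac1m-0=\tfrac1m$; for $x=m+1$ we get $0-\tfrac1m$, which contributes $0$ to this one-sided sum. Hence $\delta(X_1,X_2)=\tfrac1m=\tfrac{1}{\lceil 1/\epsilon\rceil}\le\epsilon$, as required. This step is a routine finite computation; there is essentially no hard part, since the whole difficulty of the problem only appears for $n\ge 3$ where one cannot simultaneously make all $n-1$-subsets nearly uniform on overlapping ranges. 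The $n=2$ base case is deliberately the easy anchor for the recursion in the rest of Section~\ref{sec:construction}.
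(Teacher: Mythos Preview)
Your proof is correct and follows exactly the same construction as the paper: take $X_1$ uniform on $\{1,\dots,\lceil 1/\epsilon\rceil\}$ and set $X_2=X_1+1$, then compute $\delta(X_1,X_2)=\lceil 1/\epsilon\rceil^{-1}\le\epsilon$. The only difference is that you spell out the term-by-term computation of the total variation distance, which the paper compresses into a single line.
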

   \begin{proof}
   Let $X_1$ be uniformly distributed on $\left[\lceil\frac{1}{\epsilon}\rceil\right]$ and let $X_2=X_1+1$. Then 
   \[\delta(X_1,X_2)=\left\lceil\frac{1}{\epsilon}\right\rceil^{-1}\leq \left(\frac{1}{\epsilon}\right)^{-1}=\epsilon\]
   so $(X_1,X_2)$ has $\epsilon$-indistinguishable $1$-subsets.
   \end{proof}
   
   For $n=2$ we only needed to check that $\delta(X_1,X_2)\leq \epsilon$, but for general values of $n$ there are $\binom{n}{2}$ ways of choosing the two $n-1$-subsets. In order to simplify the proofs, we will first argue that it is enough to consider \emph{neighbouring} $n-1$-subsets defined as follows. 

    \begin{defi}
 We say that $(X_1,\dots,X_n)$ \emph{has $\epsilon$-indistinguishable neighbouring $n-1$-subsets} if for any $i\in [n-1]$ the random tuples $X_{-i}$ and $X_{-(i+1)}$ have total variation distance at most $\epsilon$.
\end{defi}

\begin{prop}\label{prop:naboernok}
Fix $n\in \N$ and $\epsilon>0$. If $(X_1,\dots , X_n)$ has $\epsilon$-indistinguishable neighbouring $n-1$-subsets, is has $(n-1)\epsilon$-indistinguishable $n-1$-subsets. 
\end{prop}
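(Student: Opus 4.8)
The plan is to deduce indistinguishability of arbitrary neighbouring $n-1$-subsets — that is, $\delta(X_{-i},X_{-j})\le (n-1)\epsilon$ for all $i<j$ — from the hypothesis that only the \emph{consecutive} ones, $\delta(X_{-i},X_{-(i+1)})\le\epsilon$, are small. The obvious tool is the triangle inequality (Proposition \ref{prop:triangle}): chaining
\[
\delta(X_{-i},X_{-j})\le\sum_{k=i}^{j-1}\delta(X_{-k},X_{-(k+1)})\le (j-i)\epsilon\le (n-1)\epsilon,
\]
since $j-i\le n-1$. So at first glance this is a one-line argument.

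The subtlety — and what I expect to be the main obstacle — is that the objects being compared are \emph{sets} $\{X_{k}:k\ne i\}$, not ordered tuples, whereas the triangle inequality as stated is about random variables/tuples; one must be careful that the intermediate terms $\delta(X_{-k},X_{-(k+1)})$ in the chain genuinely refer to the same hypothesis we are given. Here the ordering saves us: because $X_1<X_2<\dots<X_n$ almost surely, the random set $\{X_k:k\ne i\}$ determines and is determined by the sorted tuple $X_{-i}=(X_1,\dots,X_{i-1},X_{i+1},\dots,X_n)$, so ``$\epsilon$-indistinguishable neighbouring $n-1$-subsets'' really is the statement $\delta(X_{-i},X_{-(i+1)})\le\epsilon$ about tuples, and the telescoping chain above applies directly. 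I would state this identification explicitly (perhaps via the map that sorts a tuple, invoking that it is a deterministic function so Proposition \ref{prop:dpi} lets us pass between the set-view and the tuple-view without increasing distance), and then run the chain.

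Concretely, the steps in order: (1) observe that for each $i$ the random $(n-1)$-set $\{X_1,\dots,X_n\}\setminus\{X_i\}$ and the tuple $X_{-i}$ are functions of one another (using strict monotonicity), so their total variation distances to the corresponding objects for another index coincide; (2) invoke Proposition \ref{prop:triangle} repeatedly to get, for any $i\le j$, $\delta(X_{-i},X_{-j})\le\sum_{k=i}^{j-1}\delta(X_{-k},X_{-(k+1)})$; (3) bound each summand by $\epsilon$ using the hypothesis and bound the number of summands $j-i$ by $n-1$; (4) conclude $\delta(X_{-i},X_{-j})\le(n-1)\epsilon$ for all $i,j$, i.e.\ $(X_1,\dots,X_n)$ has $(n-1)\epsilon$-indistinguishable $n-1$-subsets. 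No step requires real computation; the only place to be attentive is step (1), making sure the set-versus-tuple distinction does not secretly change the quantity in the triangle-inequality chain.
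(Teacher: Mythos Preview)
Your proposal is correct and takes exactly the same approach as the paper, whose proof is the single line ``This follows from repeated use of the triangle inequality.'' Your extra care about the set-versus-tuple distinction is not wrong, but the paper simply works with the tuples $X_{-i}$ throughout and does not comment on it.
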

\begin{proof} 
This follows from repeated use of the triangle inequality.
\end{proof}

\begin{prop}\label{prop:uniformdelta}
Let $n_1>n_2$ and let $U_{n_1}$ and $U_{n_2}$ be independent random variables uniformly distributed on $[n_1]$ respectively $[n_2]$. Then
\[\delta(U_{n_1},U_{n_2}+U_{n_1})=\frac{n_2+1}{2n_1}.\]
\end{prop}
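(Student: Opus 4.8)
The plan is to compute $\delta(U_{n_1}, U_{n_2} + U_{n_1})$ directly from the definition by comparing the two probability mass functions. Both random variables are supported on integers, with $U_{n_1}$ uniform on $[1,n_1]$ and $U_{n_2}+U_{n_1}$ supported on $[2, n_1+n_2]$ with a "trapezoidal" mass function: $\Pr(U_{n_2}+U_{n_1}=k)$ rises linearly from $\frac{1}{n_1 n_2}$ at $k=2$ up to the plateau value $\frac{1}{n_1}$ (attained for $n_2+1 \le k \le n_1+1$, using $n_1 > n_2$), then falls back symmetrically. So the strategy is to write $\delta = \sum_k \max(\Pr(U_{n_1}=k) - \Pr(U_{n_2}+U_{n_1}=k), 0)$ and identify exactly where the first pmf exceeds the second.

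First I would record the pmfs explicitly: $\Pr(U_{n_1}=k) = \frac{1}{n_1}$ for $k \in [1,n_1]$ and $0$ otherwise; and $\Pr(U_{n_2}+U_{n_1}=k) = \frac{1}{n_1 n_2}\cdot \#\{(a,b): a\in[n_2], b\in[n_1], a+b=k\}$. Then I would observe that the difference $\Pr(U_{n_1}=k) - \Pr(U_{n_2}+U_{n_1}=k)$ is positive precisely on the "ascending ramp" region $k \in [1, n_2]$ (where the convolution has only reached count $k-1 < n_2$, or count $0$ at $k=1$), and is $\le 0$ elsewhere: on the plateau the two agree at $\frac{1}{n_1}$, and on the descending tail $k > n_1$ the first pmf is already $0$. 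So $\delta$ equals $\sum_{k=1}^{n_2}\left(\frac{1}{n_1} - \frac{k-1}{n_1 n_2}\right)$, where the $k=1$ term is just $\frac{1}{n_1}$ since the convolution count there is $0$.

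Then I would evaluate the sum: $\sum_{k=1}^{n_2}\left(\frac{1}{n_1} - \frac{k-1}{n_1 n_2}\right) = \frac{n_2}{n_1} - \frac{1}{n_1 n_2}\cdot \frac{n_2(n_2-1)}{2} = \frac{n_2}{n_1} - \frac{n_2-1}{2n_1} = \frac{2n_2 - (n_2-1)}{2n_1} = \frac{n_2+1}{2n_1}$, which is the claimed value. I would also double-check consistency by symmetry with the descending tail: the region $k \in [n_1+1, n_1+n_2]$ contributes the same total mass $\frac{n_2+1}{2n_1}$ to $\sum_k \max(\Pr(U_{n_2}+U_{n_1}=k) - \Pr(U_{n_1}=k),0)$, matching the symmetric form of the total variation distance.

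The only mildly delicate point — and the main thing to get right — is verifying that the sign of the difference behaves as claimed across the whole range, in particular that the hypothesis $n_1 > n_2$ guarantees the plateau of the trapezoid reaches the full height $\frac{1}{n_1}$ by the time the ramp ends (so there is no intermediate region where the convolution mass is strictly between the ramp value and $\frac{1}{n_1}$ while $\Pr(U_{n_1}=k)$ is still $\frac{1}{n_1}$). Once the regions are pinned down correctly, the rest is the elementary summation above.
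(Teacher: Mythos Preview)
Your proposal is correct and follows essentially the same route as the paper: write down the two pmfs explicitly, observe that $\Pr(U_{n_1}=k)-\Pr(U_{n_1}+U_{n_2}=k)$ is positive exactly for $k\in\{1,\dots,n_2\}$ where it equals $\frac{n_2-k+1}{n_1 n_2}$, and sum to obtain $\frac{n_2+1}{2n_1}$. The only additions are your symmetry cross-check on the descending tail and the explicit remark that $n_1>n_2$ is what makes the plateau reach height $\frac{1}{n_1}$; both are fine and do not change the argument.
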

\begin{proof}
We have
\[\Pr(U_{n_1}=x)=\left\{\begin{array}{ll} \frac{1}{n_1} & \text{, if } x\in \{1,\dots,n_1\}\\ 0 & \text{, otherwise}\end{array}\right. \]
and
\[\Pr(U_{n_1}+U_{n_2}=x)\left\{\begin{array}{ll} \frac{x-1}{n_1n_2} & \text{, if }x\in \{1,\dots, n_2\} \\ \frac{1}{n_1} & \text{, if } x\in \{n_2+1,\dots n_1+1\} \\ \frac{1+n_1+n_2-x}{n_1n_2} & \text{, if } x\in\{n_1+1,\dots n_1+n_2\}\\ 0 & \text{, otherwise.} \end{array}\right.\]
So we get 
\[ \max(\Pr(U_{n_1}=x)-\Pr(U_{n_1}+U_{n_2}=x),0)=\left\{\begin{array}{ll} \frac{n_2-x+1}{n_1n_2} & \text{, if }x\in \{1,\dots, n_2\} \\ 0 & \text{, otherwise}\end{array}\right. .\]
Summing over all $x\in \{1,\dots, n_2\}$ gives us
\begin{align*}
\delta(U_{n_1},U_{n_1}+U_{n_2})=&\sum_x \frac{n_2-x+1}{n_1n_2}\\
=&\frac{n_2}{n_1}-\frac{n_2(n_2+1)}{2n_1n_2}+\frac{n_2}{n_1n_2}\\
=&\frac{n_2+1}{2n_1}.
\end{align*}
\end{proof}

We are now ready for the construction of a distribution of $X=(X_1,\dots, X_n)$ for $n\geq 3$. First we give the construction for $n=3$ and then we construct a distribution for $n$ given a distribution for $n-1$. For this recursive construction to work we need to assume more than just having $\epsilon$-indistinguishable $n-2$-subsets about the distribution for $n-1$ so we cannot use the distribution for $n=2$ as the start of the recursive definition.

\begin{lemm}\label{lemm:construction}
For all $n\geq 3$ and all $\epsilon\in (0,1)$ there exists random variables $X_1,\dots, X_n$ where the $X_i$ takes values in $\N$, with a joint distribution such that $X_1<\dots <X_n$ and
\begin{enumerate}
\item $X$ has $\epsilon(1-2^{-n})$-indistinguishable neighbouring $n-1$-subsets\label{req:close}
\item $\forall i\in[n-1]:\Pr( X_{i+1}-X_{i}< n+4-\log(\epsilon))\leq \epsilon 2^{-n-3}.$\label{req:space}
\item $X_n$ never takes values above $\exp_2^{n-2}\left(4\left\lceil \frac{1}{\epsilon}\right\rceil +6\right)-4n-2+2\log(\epsilon)$.\label{req:bound}
\end{enumerate}
\end{lemm}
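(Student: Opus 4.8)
The plan is to prove this by induction on $n$, constructing the distribution for $n$ from the distribution for $n-1$, with the base case $n=3$ handled directly.

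For the base case $n=3$, I would look for an explicit construction. The idea is to use a "staircase of staircases" type construction: pick a large scale $M$, let $X_1$ be spread over a range, then $X_2 = X_1 + (\text{something at scale roughly }1/\epsilon)$, and $X_3 = X_2 + (\text{something})$, choosing the gap distributions (e.g. uniform or nested-uniform) so that the differences $X_{-1} = (X_2,X_3)$, $X_{-2}=(X_1,X_3)$, $X_{-3}=(X_1,X_2)$ are close in total variation. Proposition~\ref{prop:uniformdelta} is clearly the tool for estimating the total variation distance between shifted uniforms, and Proposition~\ref{prop:disjoint} lets me decompose the comparison conditioned on the "coarse" part of the configuration. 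The extra requirements~\ref{req:space} (gaps are large with high probability) and~\ref{req:bound} (a concrete upper bound on $X_n$) will be read off from the construction; requirement~\ref{req:space} is what gives the induction enough room to insert a new variable without the new gaps becoming too small.

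For the inductive step, given $(X_1,\dots,X_{n-1})$ satisfying the three requirements for $\epsilon' $ (a rescaled version of $\epsilon$, shifted down by one exponential since requirement~\ref{req:bound} has an $\exp_2^{n-2}$ tower), I would build $(X_1,\dots,X_n)$ by "blowing up" each gap: replace the integer line at the scale of the old construction by a much finer scale, and within the gap between consecutive old values, insert one new random variable whose position is approximately uniform on that gap. Then $X_{-i}$ for the new tuple, for most choices of $i$, restricts to (a scaled copy of) an $(n-2)$-subset of the old tuple together with the extra fine-grained coordinate; the total variation bound follows from the inductive hypothesis (requirement~\ref{req:close} at level $n-1$) plus Proposition~\ref{prop:dpi} (data processing — the new coordinate is a random function of the old data) and the triangle inequality. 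Requirement~\ref{req:space} at level $n-1$ is precisely what ensures that after blowing up, the gaps in the new tuple are wide enough with probability $\ge 1 - \epsilon 2^{-n-3}$, closing the induction on that requirement; and the bound in requirement~\ref{req:bound} is obtained by tracking how much the blow-up multiplies $X_{n-1}$, which introduces exactly one more $\exp_2$.

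**The main obstacle** will be getting the constants to line up exactly as stated — in particular the factors $(1-2^{-n})$ in requirement~\ref{req:close} and $2^{-n-3}$ in requirement~\ref{req:space}, and the precise form $\exp_2^{n-2}(4\lceil 1/\epsilon\rceil + 6) - 4n - 2 + 2\log\epsilon$ in requirement~\ref{req:bound}. The geometric-series slack $\sum_n \epsilon 2^{-n} < \epsilon$ is what makes the induction on requirement~\ref{req:close} go through: each level is allowed to lose a little, and the losses must telescope inside the budget. I expect the delicate part is choosing the blow-up factor at each step large enough to (a) absorb the $(n-1)$ triangle-inequality applications from Proposition~\ref{prop:naboernok} used implicitly, (b) make the newly-inserted coordinate close to uniform to within the per-level error budget via Proposition~\ref{prop:uniformdelta}, and (c) keep requirement~\ref{req:space} satisfied — all simultaneously, while not blowing up $X_n$ by more than the stated one extra exponential. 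The betting-game and data-processing lemmas from Sections~\ref{sec:gambling} and~\ref{sec:props} do all the conceptual heavy lifting; what remains is careful bookkeeping of the error terms and the tower height.
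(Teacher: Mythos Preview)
Your overall framework (induction on $n$, explicit base case at $n=3$, use of Propositions \ref{prop:uniformdelta}, \ref{prop:disjoint}, \ref{prop:dpi} and the triangle inequality for the bookkeeping) matches the paper. The base-case description is in the right spirit: the paper takes $X_1$ uniform on $[2^{4\lceil 1/\epsilon\rceil+4}]$, a uniform exponent $K$, and sets $X_2=X_1+2^K$, $X_3=X_1+2^{K+1}$.

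Your inductive step, however, is not the paper's construction and, as you describe it, would not close the induction. Two concrete issues. First, $\epsilon$ is \emph{not} rescaled: the paper fixes $\epsilon$ once and for all and inducts on $n$ with the same $\epsilon$; the tower in requirement~\ref{req:bound} grows by one $\exp_2$ per step because of the construction itself, not because $\epsilon$ moves. Second, and more importantly, the step is not ``scale up the old tuple and insert one extra point in a gap.'' That kind of insertion breaks the symmetry needed for neighbouring $(n{-}1)$-subsets: removing the inserted point gives something structurally different from removing one of the original points, and there is no reason the resulting $X_{-i}$'s should be $\epsilon$-close.

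What the paper actually does: given $(X_1,\dots,X_n)$ satisfying the lemma, it builds $(Y_1,\dots,Y_{n+1})$ by letting the \emph{old values} determine the \emph{new gaps}. Concretely, $D_i$ is uniform on $[2^{X_i+4n-2\log\epsilon}]$, $Y_1$ is uniform on a suitably huge interval, and $Y_{i+1}=Y_i+D_i$. Thus the $n$ old values become, after exponentiation, the $n$ gaps of the new $(n{+}1)$-tuple. Removing $Y_{i+1}$ merges the gaps $D_i$ and $D_{i+1}$; since $D_{i+1}$ is exponentially larger than $D_i$ (this is exactly what requirement~\ref{req:space} at level $n$ buys), Proposition~\ref{prop:uniformdelta} gives $\delta(D_i+D_{i+1},D_{i+1})\le\epsilon 2^{-n-2}$, and then the comparison of $Y_{-(i+1)}$ with $Y_{-(i+2)}$ reduces, via Proposition~\ref{prop:dpi} and the triangle inequality, to the comparison of $X_{-i}$ with $X_{-(i+1)}$. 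That is where the per-level budget $\epsilon(1-2^{-n})\to\epsilon(1-2^{-(n+1)})$ comes from. The single $\exp_2$ in $D_i\sim 2^{X_i}$ is precisely what adds one floor to the tower in requirement~\ref{req:bound}. Your proposal does not contain this ``values $\to$ exponentiated gaps'' idea, and without it the induction cannot be made to deliver either the indistinguishability or the correct tower height.
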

\begin{proof}
We fix $\epsilon$ and prove the statement by induction in $n$, so first we show the statement for $n=3$. Let $X_1$ be uniformly distributed on $\left[2^{4\left\lceil\frac{1}{\epsilon}\right\rceil+4}\right]$ and let $K$ be uniformly distributed on $\left[\left\lceil \frac{1}{\epsilon}\right\rceil+3,\dots,4\left\lceil \frac{1}{\epsilon}\right\rceil+3\right]$. We now define $X_2=X_1+2^{K}$ and $X_3=X_2+2^{K}=X_1+2^{K+1}$. 

We see that the only values that $(X_1,X_2)$ can take, but $(X_1,X_3)$ cannot take, are the values where $X_1$ and $X_2$ differ by the smallest possible value, $2^{\left\lceil \frac{1}{\epsilon}\right\rceil+3}$, that is, $K$ is taking the smallest possible value, $\left\lceil \frac{1}{\epsilon}\right\rceil+3$. As $K$ is uniformly distributed on a set with $3\left\lceil \frac{1}{\epsilon}\right\rceil+1$ elements, this happens with probability 
\begin{align*}
\frac{1}{3\left\lceil \frac{1}{\epsilon}\right\rceil+1}\leq \frac{1}{\frac{3}{\epsilon}}=\frac{\epsilon}{3}.
\end{align*}
For all other values $(x_1,x_2)$ of $(X_1,X_2)$, if $K$ was one lower we would have had $(X_1,X_3)=(x_1,x_2)$. As $K$ is uniformly distributed we have
\[\delta((X_1,X_2),(X_1,X_3))\leq \frac{\epsilon}{3}<\frac{7}{8}\epsilon.\]
This shows the $i=2$ case of requirement \ref{req:close}.

To bound $\delta((X_1,X_3),(X_2,X_3))$, we first want bound $\delta((X_1,X_2),(X_2,X_3))$. The only values that can be taken by $(X_1,X_2)$ but not by $(X_2,X_3)$, are values $(x_1,x_2)$ with $x_1\leq \frac{x_2}{2}$, which is equivalent to $x_1\leq 2^K$. If $k=4\left\lceil \frac{1}{\epsilon}\right\rceil+3$, then $\Pr(X_1\leq 2^k)=\frac{1}{2}$, if $k$ is one lower, then $\Pr(X_1\leq 2^k)=\frac{1}{4}$ and so on. In total we get
\begin{align*}
\Pr(X_1\leq 2^K)=&\sum_{k=\left\lceil \frac{1}{\epsilon}\right\rceil+3}^{4\left\lceil \frac{1}{\epsilon}\right\rceil+3]} \Pr(X_1\leq 2^K|K=k)\Pr(K=k)\\
\leq&\left(\frac{1}{2}+\frac{1}{4}+\dots +\right)\frac{1}{3\left\lceil \frac{1}{\epsilon} \right\rceil+1}\\
\leq &\frac{\epsilon}{3}.
\end{align*}
Furthermore, for all values $(x_1,x_2)$ of $(X_1,X_2)$ that $(X_2,X_3)$ can take, we have $\Pr((X_1,X_2)=(x_1,x_2))=\Pr((X_2,X_3)=(x_1,x_2))$ as $(X_1,X_2)$ and $(X_2,X_3)$ are both uniformly distributed on sets of the same sizes. Thus, $\delta((X_1,X_2),(X_2,X_3))\leq \frac{\epsilon}{3}$.
By the triangle inequality we get 
\begin{align*}
\delta((X_1,X_3),(X_2,X_3))=&\delta((X_1,X_3),(X_1,X_2))+\delta((X_1,X_2),(X_2,X_3))\
\leq& \frac{2\epsilon}{3}
\leq & \frac{7}{8}\epsilon,
\end{align*}
showing the remaining case of requirement \ref{req:close}.

Next we want to show requirement $2$.
\begin{align*}
X_3-X_2=&X_2-X_1\\
=&2^K\\
\geq & 2^{\left\lceil \frac{1}{\epsilon}\right\rceil+3}\\
\geq &8\cdot 2^{\frac{1}{\epsilon}}\\
\geq &8\cdot 2^{\log(\frac{1}{\epsilon})}\\
= & 8\cdot \frac{1}{\epsilon}\\
\geq &7+\frac{1}{\epsilon}\\
\geq &7+\log\left(\frac{1}{\epsilon}\right)\\
=&7-\log(\epsilon).
\end{align*}
Here we used $x\geq \log(x)$ twice. This shows that requirement \ref{req:space} holds. 

Finally, we have $X_3=X_1+2^{K+1}\leq 2^{4\left\lceil\frac{1}{\epsilon}\right\rceil+4}+2^{4\left\lceil \frac{1}{\epsilon}\right\rceil+3+1}=2^{4\left\lceil \frac{1}{\epsilon}\right\rceil+5}$. For $\epsilon=1$ this is $2^9\geq 14=4n+2-2\log(\epsilon)$, and $2^{4\left\lceil \frac{1}{\epsilon}\right\rceil+5}\geq 2^{4 \frac{1}{\epsilon}+5}$, which decreases much faster in $\epsilon$ than $14-2\log(\epsilon)$, so we have $2^{4\left\lceil \frac{1}{\epsilon}\right\rceil+5}\geq 14-2\log(\epsilon)$ for all $\epsilon$. This shows that
\[X_3\leq 2^{4\left\lceil \frac{1}{\epsilon}\right\rceil+5}\leq 2^{4\left\lceil \frac{1}{\epsilon}\right\rceil+6}-14+2\log(\epsilon),\]
so requirement \ref{req:bound} is also true.

For the induction step, assume that $(X_1,\dots,X_n)$ satisfy the statement for $n$. We want to construct $(Y_1,\dots, Y_{n+1})$ that shows that the statement holds for $n+1$. To do this we construct a joint distribution of $(X_1,\dots, X_n,D_1,\dots ,D_n,Y_1,\dots, Y_{n+1})$. We choose $(X_1,\dots, X_n)$ so that it satisfy the requirements for $n$, and given these, we let $D_i$ be uniformly distributed on $[2^{X_i+4n-2\log(\epsilon)}]$ and let $Y_1$ be uniformly distributed on $[\exp_2^{n-1}\left(4\left\lceil \frac{1}{\epsilon}\right\rceil +6\right)/2-4n-6+2\log(\epsilon)]$. All these are independent given $(X_1,\dots, X_n)$. We define $Y_{i+1}=Y_i+D_i$ for $i\in [n]$. 

We now check that $(Y_1,\dots ,Y_{n+1})$ satisfy the three requirements. 

First we want to show that if we are given the tuple $(Y_1,D_1,D_2,\dots, D_{i-1},D_{i+1}, \dots ,D_n)$ containing $Y_1$ and all the $D_j$'s except one, $D_i$, then it will not make much of a difference if we add $D_i$ to $D_{i+1}$. That is, we want to bound 
\[\delta((Y_1,D_1,\dots, D_{i-1},D_i+D_{i+1},D_{i+2},\dots, D_n),(Y_1,D_1,\dots, D_{i-1},D_{i+1},D_{i+2},\dots ,D_n)).\]

To do this, we first get from Proposition \ref{prop:uniformdelta} that
\[\delta((D_i+D_{i+1})|_{(X_i,X_{i+1})=(x_i,x_{i+1})},D_{i+1}|_{(X_i,X_{i+1})=(x_i,x_{i+1})})=\frac{\lfloor 2^{x_i+4n-2\log(\epsilon)}\rfloor+1}{2\cdot \lfloor 2^{x_{i+1}+4n-2\log(\epsilon)}\rfloor}\leq 2^{x_i-x_{i+1}}.\]
Now Proposition \ref{prop:disjoint} gives us
\begin{align*}
\delta((&D_i+D_{i+1},X_i,X_{i+1}),(D_{i+1},X_i,X_{i+1}))\\
=&\sum_{(x_i,x_{i+1})}\Pr((X_i,X_{i+1})=(x_i,x_{i+1}))\delta((D_i+D_{i+1})|_{(X_i,X_{i+1})=(x_i,x_{i+1})},D_{i+1}|_{(X_i,X_{i+1})=(x_i,x_{i+1})}).
\end{align*}
From requirement (\ref{req:space}), we know that $\Pr(X_{i+1}-X_i<n+4-\log(\epsilon))\leq \epsilon 2^{-n-3}$. 
When $x_{i+1}-x_{i}<n+4-\log(\epsilon)$ we have
\[\delta((D_i+D_{i+1})|_{(X_i,X_{i+1})=(x_i,x_{i+1})},D_i|_{(X_i,X_{i+1})=(x_i,x_{i+1})})\leq 1\]
as $\delta$ only takes values in $[0,1]$. In all other cases, we have
\[\delta((D_i+D_{i+1})|_{(X_i,X_{i+1})=(x_i,x_{i+1})},D_i|_{(X_i,X_{i+1})=(x_i,x_{i+1})})\leq 2^{-(n+4)+\log(\epsilon)}=\epsilon 2^{-(n+4)}.\]
Summing up gives 
\[\delta((D_i+D_{i+1},X_i,X_{i+1}),(D_{i+1},X_i,X_{i+1}))\leq \epsilon 2^{-n-3}+\epsilon 2^{-n-4}\leq \epsilon 2^{-n-2}.\]

Given $X_i$ and $X_{i+1}$ and either $D_{i}+D_{i+1}$ or $D_{i+1}$, there is a random function giving $(Y_1,D_1,\dots,D_{i-1},D_i+D_{i+1},D_{i+2},\dots, D_n)$ respectively $(Y_1,D_1,\dots,D_{i-1},D_{i+1},D_{i+2},\dots, D_n)$. Thus by Proposition \ref{prop:dpi} we have
\begin{align*}
\delta((Y_1&,D_1,\dots, D_{i-1},D_i+D_{i+1},D_{i+2},\dots, D_n),(Y_1,D_1,\dots, D_{i-1},D_{i+1},D_{i+2},\dots ,D_n))\\
\leq &\delta((D_i+D_{i+1},X_i,X_{i+1}),(D_{i+1},X_i,X_{i+1}))\\
\leq &\epsilon 2^{-n-2}.
\end{align*}
This is the upper bound we wanted.

Clearly there is a random function, not depending on $i$, that given $(X_1,\dots,X_{i-1},X_{i+1},\dots X_n )$ returns $(Y_1,D_1,\dots, D_{i-1},D_{i+1},\dots, D_n)$ such that when input have the correct distribution, then the output have the correct distribution. Thus, 
\begin{align*}
\delta((Y_1&,D_1,\dots ,D_{i-1},D_{i+1},\dots D_n),(Y_1,D_1,\dots,D_i,D_{i+2},\dots D_n))\\
\leq &\delta((X_1,\dots, X_{i-1},X_{i+1},\dots, X_n),(X_1,\dots, X_i,X_{i+2},\dots X_n))\\
\leq& \epsilon(1-2^{-n}).
\end{align*}
For $i\geq 2$ we use the fact that the $Y_j$'s can be computed from the $D_j$'s and $Y_1$ and then use the triangle inequality to get
\begin{align*}
\delta((Y_1&,\dots Y_{i-1},Y_{i+1},\dots ,Y_n),(Y_1,\dots, Y_{i},Y_{i+2},\dots ,Y_n))\\
\leq &\delta((Y_1,D_1,\dots D_{i-2},D_{i-1}+D_{i},D_{i+1},\dots, D_n),(Y_1,D_1,\dots , D_{i-1},D_{i}+D_{i+1},\dots,D_n))\\
\leq & \delta((Y_1,D_1,\dots D_{i-2},D_{i-1}+D_{i},D_{i+1},\dots, D_n),(Y_1,D_1,\dots D_{i-2},D_{i},D_{i+1},\dots, D_n))\\
&+\delta((Y_1,D_1,\dots D_{i-2},D_{i},D_{i+1},\dots, D_n),(Y_1,D_1,\dots , D_{i-2},D_{i-1},D_{i+1},\dots,D_n))\\
&+\delta((Y_1,D_1,\dots , D_{i-2},D_{i-1},D_{i+1},\dots,D_n),(Y_1,D_1,\dots , D_{i-1},D_{i}+D_{i+1},\dots,D_n))\\
\leq & 2\cdot \epsilon 2^{-n-2}+\epsilon(1-2^{-n})\\
=& \epsilon (1-2^{-(n+1)}).
\end{align*}
 This shows requirement (\ref{req:close}) in the case $i\geq 2$. 

Similarly, we want to bound
\[\delta((Y_1,D_2,D_3,\dots,D_n),(Y_1+D_1,D_2,D_3,\dots, D_n)).\]
$Y_1$ is chosen uniformly from $[\frac{\exp_2^{n-1}\left(4\left\lceil \frac{1}{\epsilon}\right\rceil +6\right)}{2}-4n-6+2\log(\epsilon)]$ and $D_1$ uniformly from $[2^{X_1+4n-2\log(\epsilon)}]$.

As $X_1<\dots<X_n$ and $X_n\leq \exp_2^{n-2}\left(4\left\lceil \frac{1}{\epsilon}\right\rceil +6\right)-4n-2+2\log(\epsilon)$ we get $X_1\leq \exp_2^{n-2}\left(4\left\lceil \frac{1}{\epsilon}\right\rceil +6\right)-5n-1+2\log(\epsilon)$.
From Proposition \ref{prop:uniformdelta} we get
\begin{align*}
\delta(Y_1|_{X_1=x_1},(Y_1+D_1)|_{X_1=x_1})=&\frac{\left\lfloor 2^{x_1+4n-\log(\epsilon)}\right\rfloor+1}{2\left\lfloor \frac{\exp_2^{n-1}\left(4\left\lceil \frac{1}{\epsilon}\right\rceil +6\right)}{2}-4n-6+2\log(\epsilon)\right\rfloor}\\
\leq & \frac{2^{\exp_2^{n-2}\left(4\left\lceil \frac{1}{\epsilon}\right\rceil +6\right)-n-1+\log(\epsilon))} +1}{2^{\exp_2^{n-2}\left(4\left\lceil \frac{1}{\epsilon}\right\rceil +6\right)} -8n-14+4\log(\epsilon)}\\
\leq & \frac{ 2^{\exp_2^{n-2}\left(4\left\lceil \frac{1}{\epsilon}\right\rceil +6\right)-n-1+\log(\epsilon))} }{2^{\exp_2^{n-2}\left(4\left\lceil \frac{1}{\epsilon}\right\rceil +6\right)-1}}\\
\leq& 2^{-n+\log(\epsilon)}\\
=&\epsilon 2^{-n}.
\end{align*}
In the second inequality we used that $2^{\exp_2^{n-2}\left(4\left\lceil \frac{1}{\epsilon}\right\rceil +6\right)-1}\geq 8n+14-4\log(\epsilon)$.
Summing up over all possible $x_1$ we get
\begin{align*}
\delta((Y_1,X_1),(Y_1+D_1,X_1))=&\sum_{x_1}\Pr(X_1=x_1)\delta(Y_1|_{X_1=x_1},(Y_1+D_1)|_{X_1=x_1})\\
\leq &\epsilon 2^{-n}.
\end{align*}
Similarly to before, this implies 
\[\delta((Y_1,D_2,D_3,\dots, D_n),(Y_1+D_1,D_2,D_3,\dots,D_n))\leq \epsilon 2^{-n}\]
and hence
\begin{align*}
\delta((Y_1&,Y_3,Y_4,\dots, Y_{n+1}),(Y_2,Y_3,Y_4,\dots, Y_{n+1}))\\
\leq & \delta((Y_1,D_1+D_2,D_3,\dots, D_n),(Y_1+D_1,D_2,D_3,\dots, D_n))\\
\leq & \delta((Y_1,D_1+D_2,D_3,\dots,D_n),(Y_1,D_2,D_3,\dots,D_n))\\
&+\delta((Y_1,D_2,D_3,\dots,D_n),(Y_1+D_1,D_2,D_3,\dots, D_n))\\
\leq & \epsilon 2^{-n-2}+\epsilon 2^{-n}\\
\leq &\epsilon(1-2^{-(n+1)}).
\end{align*}
This shows the remaining case, $i=1$, of requirement \ref{req:close}.

We have $Y_{i+1}-Y_{i}=D_i$, so requirement \ref{req:space} can be written as
\[\Pr(D_i<(n+1)+4-\log(\epsilon))\leq \epsilon 2^{-(n+1)-3}.\]
By definition of $D_i$ we have
\begin{align*}
\Pr(D_i< n+5-\log(\epsilon))\leq& (n+5-\log(\epsilon)) 2^{-4n+2\log(\epsilon)-X_i}\\
\leq & \frac{ n+5-\log(\epsilon)}{2^{4n-2\log(\epsilon)+1}}\\
= & \epsilon\frac{ n+5-\log(\epsilon)}{2^{4n-\log(\epsilon)+1}}\\
\leq & \epsilon 2^{-(n+1)-3}.
\end{align*}
To see the last inequality in the above computation, we can rewrite it as $ n+5-\log(\epsilon)\leq 2^{-(n+1)-3}2^{4n-\log(\epsilon)+1}=2^{3n-3}2^{-\log(\epsilon)}$. For $n\geq 3$ and $\epsilon=1$ it is easy to see that is holds. To generalise this to all $\epsilon$ we write $t=-\log(\epsilon)$, and see that the right hand side increases faster in $t$ than the left hand side. This shows that requirement (\ref{req:space}) holds.

By construction, $Y_{n+1}=Y_1+\sum_{i=1}^nD_i$, 
where $Y_1\leq \exp_2^{n-1}\left(4\left\lceil \frac{1}{\epsilon}\right\rceil +6\right)/2-4n-6+2\log(\epsilon)$, 
and $D_i\leq 2^{X_i+4n-2\log(\epsilon)}$. 
As $X_n\leq \exp_2^{n-2}\left(4\left\lceil \frac{1}{\epsilon}\right\rceil +6\right)-4n-2+2\log(\epsilon)$ we 
get $D_n\leq 2^{\exp_2^{n-2}\left(4\left\lceil \frac{1}{\epsilon}\right\rceil +6\right)-2}=\frac{\exp_2^{n-1}\left(4\left\lceil \frac{1}{\epsilon}\right\rceil +6\right)}{4}$. We know that the $X_i$'s are increasing so $D_{n-1}$ is at most half this size and so on. In total we get 
\begin{align*}
Y_{n+1}\leq & \left(2^{-1}+2^{-2}+\dots+2^{-n}\right)\exp_2^{n-1}\left(4\left\lceil \frac{1}{\epsilon}\right\rceil +6\right)-4n-6+2\log(\epsilon)\\
\leq &\exp_2^{(n+1)-2}\left(4\left\lceil \frac{1}{\epsilon}\right\rceil +6\right)-4(n+1)-2+2\log(\epsilon).
\end{align*}
This proves that the last requirement holds.
\end{proof}

\begin{coro}\label{coro:upperbound}
For fixed $n$ there exists a distribution of $X=(X_1,\dots, X_n)$ where $1\leq X_1<X_2<\dots<X_n\leq N(\epsilon)$ are all integers and $X$ has $\epsilon$-indistinguishable $n-1$-subsets and $N(\epsilon)=\exp_2^{n-2}(O(\frac{1}{\epsilon}))$.
\end{coro}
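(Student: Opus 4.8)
The plan is to obtain Corollary~\ref{coro:upperbound} as an essentially immediate consequence of Lemma~\ref{lemm:construction} together with Proposition~\ref{prop:naboernok}, plus a short verification of the asymptotic bound on $N(\epsilon)$. First I would fix $n$. If $n=2$ the statement is already given by Proposition~\ref{prop:n2construction}, where $N(\epsilon)=\lceil 1/\epsilon\rceil+1=\exp_2^0(O(1/\epsilon))$, so I may assume $n\geq 3$ and invoke Lemma~\ref{lemm:construction}. The subtlety is that the lemma produces, for a given $\epsilon'$, a tuple with $\epsilon'(1-2^{-n})$-indistinguishable \emph{neighbouring} $n-1$-subsets, whereas the corollary asks for $\epsilon$-indistinguishable (all) $n-1$-subsets; so I would apply the lemma with a rescaled parameter $\epsilon' = \epsilon/(n-1)$ (or any convenient constant-in-$n$ rescaling that absorbs both the $(1-2^{-n})$ factor and the $(n-1)$ factor from Proposition~\ref{prop:naboernok}). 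Concretely: by Lemma~\ref{lemm:construction} applied to $\epsilon'$ there is a tuple with $\epsilon'$-indistinguishable neighbouring $n-1$-subsets (using $\epsilon'(1-2^{-n})\le \epsilon'$), and then by Proposition~\ref{prop:naboernok} it has $(n-1)\epsilon'$-indistinguishable $n-1$-subsets; choosing $\epsilon'=\epsilon/(n-1)$ gives $\epsilon$-indistinguishable $n-1$-subsets.

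Next I would read off the size bound. Requirement~\ref{req:bound} of Lemma~\ref{lemm:construction} gives $X_n \le \exp_2^{n-2}\!\left(4\lceil 1/\epsilon'\rceil + 6\right)$ (the additive $-4n-2+2\log\epsilon'$ term only helps). Substituting $\epsilon'=\epsilon/(n-1)$, the innermost exponent is $4\lceil (n-1)/\epsilon\rceil + 6 = O(1/\epsilon)$, with the constant depending on $n$ as allowed. Hence $N(\epsilon) := \exp_2^{n-2}\!\left(4\lceil (n-1)/\epsilon\rceil+6\right) = \exp_2^{n-2}\!\left(O(1/\epsilon)\right)$, and $1\le X_1<\dots<X_n\le N(\epsilon)$ with all $X_i$ integers, as required. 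For $n=2$ the bound $\exp_2^0(O(1/\epsilon))=O(1/\epsilon)$ matches the claimed form with a tower of $n-2=0$ twos.

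I do not expect a genuine obstacle here — the work has all been done in Lemma~\ref{lemm:construction}. The only things to be careful about are (i) tracking the constant rescaling of $\epsilon$ so that \emph{neighbouring} $n-1$-indistinguishability is upgraded to full $n-1$-indistinguishability via Proposition~\ref{prop:naboernok}, and (ii) checking that the chain $\epsilon' \ge \epsilon/(n-1)$ combined with $\lceil 1/\epsilon'\rceil = O(1/\epsilon)$ keeps the innermost exponent linear in $1/\epsilon$ so that the whole expression is $\exp_2^{n-2}(O(1/\epsilon))$; both are routine. If one wants the statement uniformly for all $\epsilon\in(0,1)$ rather than just "sufficiently small" $\epsilon$, one notes that for $\epsilon$ bounded away from $0$ the quantity $N(\epsilon)$ is bounded by a constant depending only on $n$, which is trivially of the asserted form.
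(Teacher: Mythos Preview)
Your proposal is correct and follows essentially the same route as the paper, which simply cites Proposition~\ref{prop:n2construction}, Proposition~\ref{prop:naboernok}, and Lemma~\ref{lemm:construction}. You have merely filled in the routine details (the rescaling $\epsilon'=\epsilon/(n-1)$ and the check that the innermost exponent stays $O(1/\epsilon)$) that the paper leaves implicit.
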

\begin{proof}
Follows from Proposition \ref{prop:n2construction} and \ref{prop:naboernok} and Lemma \ref{lemm:construction}.
\end{proof}

\begin{coro}
For fixed $n$ the value $v_1$ of Game $1$ as a function of $N$ is $O\left(\frac{1}{\log^{n-2}(N)}\right)$
\end{coro}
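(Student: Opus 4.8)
The plan is to read this off directly from Corollary~\ref{coro:upperbound}, combined with the dictionary between $v_1$ and the best achievable indistinguishability parameter. Recall from Proposition~\ref{prop:v2boundsv1} that $v_1 \le (n-1)v_2$, and from Proposition~\ref{prop:v_2=eps} that $v_2 \le \epsilon$ as soon as there is a distribution of $(X_1,\dots,X_n)$ with $1 \le X_1 < \dots < X_n \le N$ having $\epsilon$-indistinguishable $(n-1)$-subsets. So it suffices to exhibit, for each large $N$, such a distribution with $\epsilon = O\!\left(1/\log^{n-2}(N)\right)$; then $v_1 \le (n-1)\epsilon = O\!\left(1/\log^{n-2}(N)\right)$.

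First I would make the $O$ inside Corollary~\ref{coro:upperbound} explicit: there is a constant $C = C(n)$ such that for every $\epsilon \in (0,1)$ there is a distribution with $\epsilon$-indistinguishable $(n-1)$-subsets all of whose values lie in $[N(\epsilon)]$ with $N(\epsilon) \le \exp_2^{n-2}(C/\epsilon)$. Then, given $N$, I would invert this bound: set $\epsilon := C/\log^{n-2}(N)$, which lies in $(0,1)$ once $N$ is large enough. For this choice $\exp_2^{n-2}(C/\epsilon) = \exp_2^{n-2}\!\left(\log^{n-2}(N)\right) = N$, so the distribution furnished by Corollary~\ref{coro:upperbound} takes all of its values in $[N]$. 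Hence $v_2 \le \epsilon = C/\log^{n-2}(N)$ and therefore $v_1 \le (n-1)C/\log^{n-2}(N)$.

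Finally I would dispose of the small-$N$ regime in which $\log^{n-2}(N)$ is not meaningfully positive, or in which $C/\log^{n-2}(N) \ge 1$: there $v_1 \le 1$ holds trivially, since Game~$1$ has payoffs in $\{-1,0,1\}$, and this is absorbed into the $O$-notation after possibly enlarging the hidden constant. Putting the two ranges together yields $v_1 = O\!\left(1/\log^{n-2}(N)\right)$ for fixed $n$.

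The only real work here is bookkeeping — carrying the $n$-dependent constant through the tower-exponential bound of Corollary~\ref{coro:upperbound}, inverting $\exp_2^{n-2}$ cleanly, and checking that the degenerate small-$N$ range is harmless. The substantive content, namely that a near-optimal dealer strategy exists with only a height-$(n-2)$ tower-exponential blow-up in the support, has already been established in Section~\ref{sec:construction}, so this corollary is essentially a translation of that fact into the language of $v_1$.
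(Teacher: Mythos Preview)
Your proof is correct and follows essentially the same route as the paper: the paper's one-line proof just cites Theorem~\ref{theo:equivalence} (which packages together Propositions~\ref{prop:v2boundsv1} and~\ref{prop:v_2=eps}) and Corollary~\ref{coro:upperbound}, while you unpack those ingredients explicitly and carry out the inversion of the tower exponential and the small-$N$ bookkeeping by hand. There is no substantive difference in approach.
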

\begin{proof}
This follows from Theorem \ref{theo:equivalence} and Corollary \ref{coro:upperbound}. 
\end{proof}

\begin{coro}
For fixed $n$ there exists a distribution of $X=(X_1,\dots, X_n)$ where $1\leq X_1<X_2<\dots<X_n\leq N(\epsilon)$ are all integers and $X$ has $\epsilon$-indistinguishable subsets and $N(\epsilon)=\exp_2^{n-2}(O(\frac{1}{\epsilon}))$.
\end{coro}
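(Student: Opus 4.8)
The plan is to bootstrap the previous corollary, which only controls $(n-1)$-subsets, up to all subset sizes by paying a multiplicative factor in the accuracy that depends solely on $n$, and then to absorb that factor into the $O(\cdot)$ because $n$ is fixed. The key input is Proposition~\ref{prop:subsetfromnminus1}: a tuple with $\epsilon$-indistinguishable neighbouring $(n-1)$-subsets automatically has $n^2\epsilon$-indistinguishable subsets of every size $m<n$. So it suffices to produce, for a suitably scaled accuracy $\epsilon'$, a tuple with $\epsilon'$-indistinguishable neighbouring $(n-1)$-subsets together with a tower-type bound on $X_n$, and this is exactly what Lemma~\ref{lemm:construction} delivers for $n\ge 3$ (and Proposition~\ref{prop:n2construction} for $n=2$).

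Concretely, I would argue as follows. If $\epsilon\ge 1$ the statement is trivial, since total variation distance never exceeds $1$ and one may take $X_i=i$; so assume $\epsilon\in(0,1)$. The case $n=2$ is immediate from Proposition~\ref{prop:n2construction}: the only subsets to compare are the $1$-subsets, and $X_2\le \lceil 1/\epsilon\rceil+1 = O(1/\epsilon) = \exp_2^{0}(O(1/\epsilon))$. For $n\ge 3$, set $\epsilon' = \epsilon/n^2 \in(0,1)$ and apply Lemma~\ref{lemm:construction} with parameter $\epsilon'$ to obtain integer-valued $X_1<\dots<X_n$ whose distribution has $\epsilon'(1-2^{-n})$-indistinguishable neighbouring $(n-1)$-subsets — in particular $\epsilon'$-indistinguishable neighbouring $(n-1)$-subsets — and with $X_n \le \exp_2^{n-2}\left(4\lceil 1/\epsilon'\rceil+6\right)$. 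Now Proposition~\ref{prop:subsetfromnminus1} yields that $(X_1,\dots,X_n)$ has $n^2\epsilon' = \epsilon$-indistinguishable subsets.

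What remains is the size bound, the only point where one must be slightly careful that rescaling does not escape the $O(\cdot)$ or inflate the tower height. Since $n$ is fixed, $1/\epsilon' = n^2/\epsilon = O(1/\epsilon)$, hence $4\lceil 1/\epsilon'\rceil + 6 = O(1/\epsilon)$ and therefore $X_n \le \exp_2^{n-2}(O(1/\epsilon)) = N(\epsilon)$, as required; the tower height stays $n-2$ because we are still working with a tuple of length $n$. There is no genuine obstacle here — the corollary is just the composition of Lemma~\ref{lemm:construction} (and Proposition~\ref{prop:n2construction} in the base case) with Proposition~\ref{prop:subsetfromnminus1}, plus the observation that replacing $\epsilon$ by $\epsilon/n^2$ costs nothing asymptotically when $n$ is held fixed.
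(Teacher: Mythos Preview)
Your proof is correct and follows essentially the same approach as the paper: combine the construction with Proposition~\ref{prop:subsetfromnminus1}, rescaling $\epsilon$ by a factor depending only on $n$. The only cosmetic difference is that the paper cites Corollary~\ref{coro:upperbound} (which already packages Lemma~\ref{lemm:construction} and Proposition~\ref{prop:n2construction}), whereas you invoke Lemma~\ref{lemm:construction} and Proposition~\ref{prop:n2construction} directly; your route is marginally more direct since Proposition~\ref{prop:subsetfromnminus1} only needs neighbouring $(n-1)$-subsets, so the detour through Proposition~\ref{prop:naboernok} is unnecessary.
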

\begin{proof}
Follows from Proposition \ref{prop:subsetfromnminus1} and Corollary \ref{coro:upperbound}.
\end{proof}

\section{Lower bounds}\label{sec:lower bound}

In this section we will show lower bounds on how large values $X_n$ need to take if $X=(X_1,\dots, X_n)$ has $\epsilon$-indistinguishable $n-1$-subsets and we always have $X_1\geq 0$ and $X_{i+1}\geq X_i+1$. We no longer require that the $X_i$ are integers, only that there are at least one apart. This weaker requirement makes the induction argument easier. Clearly, any lower bound we show under the assumption that the $X_i$'s are at least one apart will also be a lower bound in the case where the $X_i$ have to take integer values. Conversely, if you have a distribution of $X$ with $\epsilon$-indistinguishable $n-1$-subsets and $X_1\geq 0, X_{i+1}\geq X_i+1$ you can define $X'$ by $X'_i=1+\lfloor X_{i}\rfloor$. Then $X_1<X_2<\dots<X_n$ will be natural numbers and by Proposition \ref{prop:dpi} $X'$ will have $\epsilon$-indistinguishable $n-1$-subsets. 

\begin{prop}\label{prop:tooln2}
If $X_1$ and $X_2$ are discrete random variables taking real values in an interval $[a,b]$ and $\E X_2\geq \E X_1+1$ then 
\[\delta(X_1,X_2)\geq \frac{1}{b-a}.\] 
\end{prop}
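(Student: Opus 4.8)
The plan is to use the dual (test-function) characterisation of total variation distance. First I would record the elementary fact that for any function $g$ taking values in $[0,1]$ one has
\[\E[g(X_2)] - \E[g(X_1)] \leq \delta(X_1,X_2).\]
This follows directly from the definition: $\E[g(X_2)] - \E[g(X_1)] = \sum_x g(x)\bigl(\Pr(X_2=x) - \Pr(X_1=x)\bigr)$; the terms with $\Pr(X_2=x) - \Pr(X_1=x) \leq 0$ contribute at most $0$ since $g\geq 0$, and on the remaining terms $g(x)\leq 1$, so the sum is bounded by $\sum_x \max(\Pr(X_2=x) - \Pr(X_1=x),0) = \delta(X_1,X_2)$.

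Next I would apply this with the affine rescaling $g(x) = \frac{x-a}{b-a}$, which maps $[a,b]$ into $[0,1]$ and hence is a legitimate test function on the range of $X_1$ and $X_2$. On one hand, linearity of expectation gives $\E[g(X_2)] - \E[g(X_1)] = \frac{\E X_2 - \E X_1}{b-a} \geq \frac{1}{b-a}$ by the hypothesis $\E X_2 \geq \E X_1 + 1$. On the other hand, the first step bounds the same quantity by $\delta(X_1,X_2)$. Chaining the two inequalities yields $\delta(X_1,X_2)\geq \frac{1}{b-a}$, as desired.

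There is essentially no serious obstacle here: the only choice to be made is the test function, and once one insists on a function bounded in $[0,1]$ whose expectation gap is proportional to the mean gap, the normalisation by $b-a$ is forced. The one point requiring mild care is that all the sums above are over the (countable) supports of the discrete random variables and converge absolutely, which is guaranteed because $[a,b]$ is bounded so the expectations $\E X_1,\E X_2$ exist.
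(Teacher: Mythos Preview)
Your proof is correct and is essentially the same argument as the paper's: both bound $\E X_2-\E X_1$ by $(b-a)\,\delta(X_1,X_2)$ via the elementary inequality $\sum_x g(x)(\Pr(X_2=x)-\Pr(X_1=x))\leq \delta(X_1,X_2)$ for $g$ with values in $[0,1]$. The only cosmetic difference is that the paper first translates so that $a=0$ and then uses $g(x)=x$ (bounded by $b$), whereas you apply the affine rescaling $g(x)=(x-a)/(b-a)$ in one step.
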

\begin{proof}
If $a\neq 0$ we can subtract $a$ from $X_1$ and $X_2$, and set the new $b$ to be $b-a$ and $a$ to be $0$. We will still have $\E X_2\geq \E X_1+1$ and the distance $\delta(X_1,X_2)$ and $b-a$ are not affected by this. So in the following we will assume $a=0$.

Then we have
\begin{align*}
\E X_2 -\E X_1 =& \sum_{x}x(\Pr(X_2=x)-\Pr(X_1=x))\\
\leq &\sum_x x\max(\Pr(X_2=x)-\Pr(X_1=x),0) \\
\leq &b\sum_x  \max(\Pr(X_2=x)-\Pr(X_1=x),0)\\
=&b\delta(X_2,X_1)\\
=&b\delta(X_1,X_2).
\end{align*}
So $\delta(X_1,X_2)\geq \frac{1}{b}=\frac{1}{b-a}$.
\end{proof}

We can now show a lower bound in the case $n=2$.

\begin{prop}\label{prop:uppern2}
If $X_1,X_2$ are random variables over the non-negative real numbers such that $X_2\geq X_1+1$ and $(X_1,X_2)$ has $\epsilon$-indistinguishable $1$-subsets, then $X_2$ must take values of at least $\frac{1}{\epsilon}$ with positive probability. 
\end{prop}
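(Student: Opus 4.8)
The plan is to apply Proposition \ref{prop:tooln2} with the roles of $X_1$ and $X_2$ kept as is, so that the hypothesis $X_2 \geq X_1 + 1$ gives $\E X_2 \geq \E X_1 + 1$, while using the $\epsilon$-indistinguishability to force the interval $[a,b]$ containing the support to be wide. Concretely, suppose for contradiction that $X_2$ never takes values as large as $\frac{1}{\epsilon}$; since $X_1 < X_2$, both variables are then supported on the interval $[0, b)$ with $b = \frac{1}{\epsilon}$. Applying Proposition \ref{prop:tooln2} to the truncated interval (or to any $[0,b']$ with $b' < \frac{1}{\epsilon}$), we get $\delta(X_1,X_2) \geq \frac{1}{b'} > \epsilon$, contradicting the assumption that $(X_1,X_2)$ has $\epsilon$-indistinguishable $1$-subsets, which by definition says exactly $\delta(X_1,X_2) \leq \epsilon$.

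A small amount of care is needed about whether the supremum of the support is attained and whether the inequality should be strict or not: Proposition \ref{prop:tooln2} is stated for discrete random variables on a closed interval $[a,b]$, so I would first observe that if $X_2 < \frac{1}{\epsilon}$ always, then for some $b' < \frac{1}{\epsilon}$ the whole joint support of $(X_1,X_2)$ lies in $[0,b']$ (this uses that the distributions are discrete so we can pick $b'$ just above the largest atom of $X_2$, or argue by taking a supremum and noting $\delta$ is a limit). Then $\delta(X_1,X_2) \geq \frac{1}{b'} > \epsilon$, the desired contradiction. Alternatively, phrase it directly: let $b$ be the supremum of the support of $X_2$; then $X_1, X_2$ are supported in $[0,b]$, Proposition \ref{prop:tooln2} gives $\epsilon \geq \delta(X_1,X_2) \geq \frac{1}{b}$, hence $b \geq \frac{1}{\epsilon}$, and since the support is discrete this supremum is achieved (or approached), so $X_2$ takes values at least $\frac{1}{\epsilon}$ (or arbitrarily close to it) with positive probability.

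The main obstacle — really the only subtlety — is this boundary issue of strict versus non-strict inequality and attainment of the supremum; the algebraic content is entirely contained in Proposition \ref{prop:tooln2} and the definition of $\epsilon$-indistinguishable $1$-subsets. I expect the cleanest writeup is the contrapositive: assume $X_2 < \frac{1}{\epsilon}$ everywhere, deduce the support sits in $[0,b']$ for some $b' < \frac{1}{\epsilon}$, and apply the proposition to reach $\delta(X_1,X_2) > \epsilon$.
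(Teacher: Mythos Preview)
Your approach is essentially the same as the paper's: both reduce immediately to Proposition~\ref{prop:tooln2} via $\E X_2\geq \E X_1+1$ and the identification of ``$\epsilon$-indistinguishable $1$-subsets'' with $\delta(X_1,X_2)\leq\epsilon$. The paper's proof is a two-line application of that proposition and does not spell out the contrapositive or the boundary issue you raise.

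One small correction to your discussion: discreteness alone does not guarantee a largest atom, so the step ``pick $b'$ just above the largest atom of $X_2$'' is not justified as written. Your alternative formulation---let $b=\sup\operatorname{supp}(X_2)$, apply Proposition~\ref{prop:tooln2} on $[0,b]$ to get $\epsilon\geq\delta(X_1,X_2)\geq 1/b$, hence $b\geq 1/\epsilon$---is the right one, and is exactly what the paper is (tersely) doing. The residual worry about whether the supremum is attained is a genuine edge case that the paper also glosses over; for the purposes of the lower-bound results downstream it is harmless, since only the essential supremum of $X_2$ matters.
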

\begin{proof}
To say that $(X_1,X_2)$ has $\epsilon$-indistinguishable $1$-subset means that $\delta(X_2,X_1)\leq \epsilon$. As $X_2\geq X_1+1$ we have $\E X_2\geq \E X_1+1$ and the statement follows from Proposition \ref{prop:tooln2}.
\end{proof}

Here we allowed $X_1$ to be $0$. If we required $X_1$ and $X_2$ to be natural numbers, the lower bound would be $\lceil \frac{1}{\epsilon}\rceil+1$, which exactly matches our construction in Proposition \ref{prop:n2construction}. 

We can combine Proposition \ref{prop:tooln2} with Proposition \ref{prop:dpi} to get the following.

\begin{prop}\label{prop:tooln3}
If $X_1$ and $X_2$ are random variables with domain $\mathcal{X}$ and $f:\mathcal{X}\to [a,b]$ is a function such that $\E f(X_2)\geq \E f(X_1)+1$ then $\delta(X_1,X_2)\geq \frac{1}{b-a}$.
\end{prop}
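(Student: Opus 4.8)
The plan is to reduce this to the one-dimensional case (Proposition \ref{prop:tooln2}) via the data processing inequality (Proposition \ref{prop:dpi}). Concretely, set $Y_1 = f(X_1)$ and $Y_2 = f(X_2)$. These are discrete random variables taking real values in the interval $[a,b]$.

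First I would apply Proposition \ref{prop:dpi} to conclude $\delta(Y_1,Y_2) \le \delta(X_1,X_2)$. To fit the statement of that proposition literally, take the auxiliary random variable to be a constant (trivially independent from $X_1$ and $X_2$), so that the function of $X_i$ and this constant is just $f(X_i)$; then the proposition gives $\delta(f(X_1),f(X_2)) \le \delta(X_1,X_2)$. Next, the hypothesis $\E f(X_2) \ge \E f(X_1) + 1$ is exactly $\E Y_2 \ge \E Y_1 + 1$, so Proposition \ref{prop:tooln2} applies to $Y_1,Y_2$ on $[a,b]$ and yields $\delta(Y_1,Y_2) \ge \frac{1}{b-a}$. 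Chaining the two inequalities gives $\delta(X_1,X_2) \ge \delta(Y_1,Y_2) \ge \frac{1}{b-a}$, which is the claim.

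There is no real obstacle here: the only thing to be slightly careful about is that Proposition \ref{prop:dpi} is phrased with an independent auxiliary variable $Y$ and a joint function $f(X_i,Y)$, whereas here we only need the deterministic special case; this is handled by taking the auxiliary variable degenerate. Everything else is a direct substitution into the two cited results.
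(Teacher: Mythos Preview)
Your proposal is correct and is essentially identical to the paper's own proof: apply Proposition~\ref{prop:tooln2} to $f(X_1),f(X_2)$ to get $\delta(f(X_1),f(X_2))\geq \frac{1}{b-a}$, then use the data processing inequality (Proposition~\ref{prop:dpi}) to conclude $\delta(X_1,X_2)\geq \delta(f(X_1),f(X_2))$. Your remark about taking the auxiliary variable in Proposition~\ref{prop:dpi} to be degenerate is a fine way to make the citation literal.
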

\begin{proof}
By Proposition \ref{prop:tooln2} we have $\delta(f(X_1),f(X_2))\geq \frac{1}{b-a}$, so by Proposition \ref{prop:dpi} we get $\delta(X_1,X_2)\geq \frac{1}{b-a}$.
\end{proof}

We will now show the lower bound in the case $n=3$.

\begin{prop}\label{prop:uppern3}
Let $X_1,X_2,X_3$ be random variables taking non-negative real numbers such that $X_2\geq X_1+1$ and $X_3\geq X_2+1$. If $(X_1,X_2,X_3)$ has $\epsilon$-indistinguishable $2$-subsets, then $X_3$ must take values of at least $2^{\frac{1}{\epsilon}}$ with positive probability. 
\end{prop}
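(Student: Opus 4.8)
The plan is to reduce everything to the two-variable tools of Proposition \ref{prop:tooln2} and Proposition \ref{prop:tooln3} by passing to the \emph{gap} variables. Put $G_{12}=X_2-X_1$, $G_{23}=X_3-X_2$ and $G_{13}=X_3-X_1$, so that $G_{13}=G_{12}+G_{23}$; each of the three is at least $1$ (consecutive $X_i$ differ by at least $1$) and at most $X_3$ (since $X_1\geq 0$). Applying the data processing inequality (Proposition \ref{prop:dpi}) to the deterministic map sending a two-element set of reals to the difference of its larger and smaller element, the $\epsilon$-indistinguishability of the $2$-subsets yields $\delta(G_{12},G_{13})\leq\epsilon$ (from $\{X_1,X_2\}$ versus $\{X_1,X_3\}$) and $\delta(G_{23},G_{13})\leq\epsilon$ (from $\{X_2,X_3\}$ versus $\{X_1,X_3\}$). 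If $X_3$ is unbounded there is nothing to prove, so we may assume $X_3\leq M$ with probability $1$; then all three gaps lie in $[1,M]$ and $\log$ of each lies in $[0,\log M]$, so all relevant expectations are finite.

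The crucial pointwise estimate is the arithmetic--geometric inequality $(a+b)^2\geq 4ab$ for positive reals, which after taking logarithms becomes
\[
\bigl(\log G_{13}-\log G_{12}\bigr)+\bigl(\log G_{13}-\log G_{23}\bigr)=\log\frac{G_{13}^{\,2}}{G_{12}G_{23}}\geq \log 4=2 .
\]
Taking expectations, at least one of $\E\log G_{13}-\E\log G_{12}$ and $\E\log G_{13}-\E\log G_{23}$ is at least $1$. In the first case apply Proposition \ref{prop:tooln3} to the pair $(G_{12},G_{13})$ with the function $\log\colon[1,M]\to[0,\log M]$; in the second case apply it to $(G_{23},G_{13})$. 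Either way one obtains total variation distance at least $\frac{1}{\log M}$ between a pair of gap variables whose distance we have already bounded by $\epsilon$. Hence $\frac{1}{\log M}\leq\epsilon$, that is $\log M\geq\frac{1}{\epsilon}$, so $X_3$ must take a value of at least $2^{1/\epsilon}$ with positive probability.

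The only genuine content is the decision to work with the gap variables rather than with the $X_i$ themselves, together with the trick of \emph{adding} the two comparisons against $G_{13}$: a priori we do not know whether $G_{12}$ or $G_{23}$ is the stochastically larger gap, and the $(a+b)^2\geq 4ab$ inequality is exactly what guarantees that at least one of the two comparisons is usable in Proposition \ref{prop:tooln3}. The remaining points are routine bookkeeping: one must check that the gaps are genuinely $\geq 1$ (so that $\log$ is non-negative) and $\leq X_3$ (so that $\log$ ranges in $[0,\log M]$), both immediate from $X_1\geq 0$, $X_2\geq X_1+1$, $X_3\geq X_2+1$. This is the pattern one expects to iterate: applying the same reduction to gap variables and inducting should convert one extra $\log$ at each step into one extra $\exp_2$ in the bound, giving the $\exp_2^{n-2}$ tower for general $n$.
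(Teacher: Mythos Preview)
Your proof is correct and is essentially the same as the paper's. The paper defines $f(x,y)=\log(y-x)$ and works with $\log(X_3-X_1),\log(X_3-X_2),\log(X_2-X_1)$ directly, invoking Jensen's inequality for the concave function $\log$ to get $2\log(X_3-X_1)\geq \log(X_3-X_2)+\log(X_2-X_1)+2$; your $(a+b)^2\geq 4ab$ is the same inequality, and your passage through the gap variables $G_{ij}$ is just a relabelling of the paper's $f$. Your write-up is in fact a bit cleaner: you explicitly dispose of the unbounded case before taking expectations, and you cite Proposition~\ref{prop:tooln3} rather than Proposition~\ref{prop:tooln2}, which is the version actually being used.
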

\begin{proof}
Let $f(x,y)=\log(y-x)$. We have $X_3-X_1=(X_3-X_2)+(X_2-X_1)$ so by Jensen's inequality we get
\[\log(X_3-X_1)-1=\log\left(\frac{X_3-X_1}{2}\right) \geq \frac{\log(X_3-X_2)+\log(X_2-X_1)}{2}.\]
Thus
\[\E 2\log(X_3-X_1)-2\geq \E \log(X_3-X_2)+\E \log(X_2-X_1).\]
We must have at least one of $\E \log(X_3-X_1)\geq \E \log(X_3-X_2)+1$ and $\E \log(X_3-X_1)\geq \E \log(X_2-X_1)+1$. Assume without loss of generality that the first one is the case. As $(X_1,X_2,X_3)$ has $\epsilon$-indistinguishable $2$-subsets we have $\delta((X_1,X_3),(X_2,X_3))\leq 
\epsilon$ so Proposition \ref{prop:tooln2} tell us that the $\log$'s must take values in an interval of length $\frac{1}{\epsilon}$. As the $X_i$'s always differ by at least one, the $\log$'s only take non-negative values. Hence, $\log(X_3-X_1)\geq \frac{1}{\epsilon}$ with positive probability, so $X_3\geq X_3-X_1\geq 2^{\frac{1}{\epsilon}}$ with positive probability. 
\end{proof}

In later proofs we would like to be able to ignore events that only happen with small probability, and argue that this does not increase the total variation distance between two random variables too much. In order to do that, we need the following proposition.

\begin{prop}\label{prop:givenT}
Let $(X_1,X_2,T)$ be random variables with some joint distribution, where $T$ only takes values $0$ and $1$ and $\Pr(T=0)=\epsilon<1$ and $\delta(X_1,X_2)=\delta$. For $i\in\{1,2\}$ define $X_i'=X_i|_{T=1}$. Then 
\[\delta(X_1',X_2')\leq \frac{\delta+\epsilon}{1-\epsilon}.\]
\end{prop}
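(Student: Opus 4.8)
The plan is to exhibit a joint distribution of $(X_1', X_2', T)$ realising $\delta(X_1',X_2')$ in the sense of Proposition \ref{prop:flag}, by transporting a witness for $\delta(X_1,X_2)\le\delta$ through the conditioning on $T=1$. First I would apply Proposition \ref{prop:flag} to $X_1$ and $X_2$: since $\delta(X_1,X_2)=\delta$, there is a joint distribution $(X_1,S)$ with $S\in\{0,1\}$, $\Pr(S=0)\le\delta$, and $\Pr((X_1,S)=(x,1))\le\Pr(X_2=x)$ for all $x$. Crucially I should set this up so that $S$ is a function of (or at least jointly distributed with) $X_1$, and then consider the joint law of $(X_1,S,T)$ obtained by drawing $T$ from its conditional distribution given $X_1$ (or given $(X_1,X_2)$, whichever the problem intends). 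The idea is that conditioning on $T=1$ only removes an $\epsilon$-fraction of the mass, so the ``flag'' $S$ still certifies closeness after conditioning, once we account for the renormalisation factor $1/\Pr(T=1)=1/(1-\epsilon)$.

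Concretely, the key steps are: (1) define $S'$ on the conditioned space by $S'=1$ iff $S=1$ and $T=1$ — equivalently just reuse $S$ after conditioning on $T=1$; (2) bound $\Pr(S'=0\mid T=1)=\Pr(S=0\text{ or }T=0\mid T=1)\le \frac{\Pr(S=0)+\Pr(T=0)}{\Pr(T=1)}\le\frac{\delta+\epsilon}{1-\epsilon}$ by a union bound and the definition of conditional probability; (3) verify the domination condition: for each $x$, $\Pr(X_1'=x, S'=1)=\frac{\Pr(X_1=x,S=1,T=1)}{1-\epsilon}\le\frac{\Pr(X_1=x,S=1)}{1-\epsilon}$, and compare this to $\Pr(X_2'=x)=\frac{\Pr(X_2=x,T=1)}{1-\epsilon}$. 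Then apply the converse direction of Proposition \ref{prop:flag} to conclude $\delta(X_1',X_2')\le\frac{\delta+\epsilon}{1-\epsilon}$.

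The main obstacle is step (3): the inequality $\Pr(X_1=x,S=1,T=1)\le\Pr(X_2=x,T=1)$ need not follow from $\Pr(X_1=x,S=1)\le\Pr(X_2=x)$ unless $T$ is conditionally independent of $S$ given an appropriate variable, or unless $T$ is itself a function of the value taken (so that the event $\{T=1\}$ intersected with $\{X_1=x\}$ behaves compatibly with the same intersection on the $X_2$ side). I expect the resolution is that $T$ should be taken as part of the data defining $X_1'$ and $X_2'$ as conditionings of the \emph{same} underlying sample space, so that one can build the flag $S$ on the full space $(X_1,X_2,T)$ directly — i.e., apply Proposition \ref{prop:flag} not to the marginals but in a way that respects $T$ — or alternatively to observe that the bound we actually need only requires $\Pr(X_1'=x,S'=1)\le\Pr(X_2'=x)$ and this can be arranged by choosing $S'$ more carefully, thinning $S=1$ further where the $T=1$ conditioning has disturbed the domination. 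If a direct coupling is awkward, the fallback is a purely computational argument: write $\delta(X_1',X_2')=\sum_x\max\!\big(\Pr(X_1=x\mid T=1)-\Pr(X_2=x\mid T=1),0\big)$, bound each conditional probability in terms of the unconditional one via $\Pr(X_i=x\mid T=1)\le\frac{\Pr(X_i=x)}{1-\epsilon}$ and $\Pr(X_i=x\mid T=1)\ge\Pr(X_i=x)-\epsilon$-type estimates, and collect terms to reach $\frac{\delta+\epsilon}{1-\epsilon}$. I would present the flag-based argument as primary since it is cleanest, and keep the arithmetic version in reserve.
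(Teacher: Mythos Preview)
Your primary flag-based argument does not go through, and the fixes you sketch do not rescue it. The obstacle you name in step~(3) is real: from $\Pr(X_1=x,S=1)\le\Pr(X_2=x)$ one cannot deduce $\Pr(X_1=x,S=1,T=1)\le\Pr(X_2=x,T=1)$, because $T$ may be correlated with $X_1$ and $X_2$ in completely different ways (indeed, already when $X_1$ and $X_2$ have the same marginal, so $\delta=0$ and $S\equiv 1$, the conditioned laws $X_1|_{T=1}$ and $X_2|_{T=1}$ can differ). Your proposed remedy of ``thinning $S=1$ further'' so that the domination holds amounts to setting $\Pr(S'=1\mid X_1=x,T=1)=\min\bigl(1,\Pr(X_2=x,T=1)/\Pr(X_1=x,T=1)\bigr)$, and then $\Pr(S'=0\mid T=1)$ is \emph{exactly} $\delta(X_1',X_2')$ --- circular. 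The alternative of ``building the flag on the full space $(X_1,X_2,T)$'' would need control of $\delta\bigl((X_1,T),(X_2,T)\bigr)$, which the hypotheses do not give you.

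The paper's proof is precisely your fallback: a four-line direct computation. One writes $\delta(X_1',X_2')=\frac{1}{1-\epsilon}\sum_x\max\bigl(\Pr(X_1=x,T=1)-\Pr(X_2=x,T=1),0\bigr)$, replaces $\Pr(X_1=x,T=1)$ by the larger $\Pr(X_1=x)$, inserts $\pm\Pr(X_2=x)$, splits the $\max$, and sums the two pieces to $\delta$ and $\epsilon$ respectively. One caveat on your phrasing: the lower estimate is not the pointwise ``$\Pr(X_i=x\mid T=1)\ge\Pr(X_i=x)-\epsilon$'' (which is vacuous for small $\Pr(X_i=x)$) but the exact identity $\Pr(X_2=x,T=1)=\Pr(X_2=x)-\Pr(X_2=x,T=0)$, with the $\epsilon$ appearing only after summing $\sum_x\Pr(X_2=x,T=0)=\Pr(T=0)$. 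So: drop the flag route entirely and lead with the arithmetic.
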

\begin{proof}
\begin{align*}
\delta(X_1',X_2')=& \sum_x \max(\Pr(X_1'=x)-\Pr(X_2'=x),0)\\
=& \sum_x \max(\Pr(X_1=x|T=1)-\Pr(X_2=x|T=1),0)\\
=& \sum_x \frac{\max(\Pr(X_1=x,T=1)-\Pr(X_2=x,T=1),0)}{\Pr(T=1)}\\
\leq & \sum_x \frac{\max(\Pr(X_1=x)-\Pr(X_2=x,T=1),0)}{\Pr(T=1)}\\
=&\sum_x \frac{\max(\Pr(X_1=x)-\Pr(X_2=x)+\Pr(X_2=x)-\Pr(X_2=x,T=1),0)}{\Pr(T=1)}\\
\leq &\sum_x \frac{\max(\Pr(X_1=x)-\Pr(X_2=x),0)+\max(\Pr(X_2=x)-\Pr(X_2=x,T=1),0)}{\Pr(T=1)}\\
\leq&\frac{\delta+\epsilon}{1-\epsilon}.
\end{align*}
\end{proof}

We will now consider the case $n=4$. Before we show the lower bound, we will show that if $X$ has $\epsilon$-indistinguishable $3$-subsets then $(X_1,X_2,X_3,X_4)$ will with high probability be in one of two cases. Intuitively, one of these cases corresponds to the gaps $X_2-X_1, X_3-X_2,X_4-X_3$ increasing and the other corresponds to the gaps decreasing.  

\begin{prop}\label{prop:n4isbiased}
Let $X_1,X_2,X_3,X_4$ be discrete random variable taking real values such that $X_1<X_2<X_3<X_4$. Assume that $(X_1,X_2,X_3,X_4)$ has $\epsilon$-indistinguishable $3$-subsets. Then with probability at least $1-9\epsilon$ we have $\frac{X_3-X_2}{X_4-X_1}< \frac{1}{4}$ and one of the following
\begin{enumerate}
\item$X_3< \frac{X_1+X_4}{2}$ and $X_2\leq \frac{X_1+X_3}{2}$, or
\item $X_2> \frac{X_1+X_4}{2}$ and $X_3> \frac{X_2+X_4}{2}$.
\end{enumerate}

\end{prop}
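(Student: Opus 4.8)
The plan is to set up two auxiliary "quantities" measured on the random variables, apply the $\epsilon$-indistinguishability of $3$-subsets through Proposition \ref{prop:tooln2}/\ref{prop:tooln3} to force certain configurations to be rare, and then show that the complement of the two listed cases is covered by those rare configurations. Concretely, by removing $X_2$ we get $(X_1,X_3,X_4)$ and by removing $X_3$ we get $(X_1,X_2,X_4)$, and these are $\epsilon$-indistinguishable; also $(X_1,X_2,X_4)$ vs.\ $(X_2,X_3,X_4)$ and $(X_1,X_3,X_4)$ vs.\ $(X_1,X_2,X_3)$ are $\epsilon$-close. I would apply a logarithmic/ratio potential, as in the proof of Proposition \ref{prop:uppern3}, but now to compare the ``middle gap'' $X_3-X_2$ against the ``outer gap'' $X_4-X_1$, and separately to locate $X_2$ and $X_3$ relative to the midpoints. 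The key point is that each of the ``bad'' sub-events — for instance $X_3-X_2\geq \frac14 (X_4-X_1)$, or ($X_3\geq \frac{X_1+X_4}{2}$ and $X_2> \frac{X_1+X_3}{2}$), i.e.\ the middle points being ``in the center'' rather than pushed to one side — would let us build a function $f$ of a $3$-subset whose expectation on one $3$-subset exceeds that on a neighbouring $3$-subset by at least $1$ (after suitable normalization), yet Proposition \ref{prop:tooln2} caps the total variation distance below what $\epsilon$-indistinguishability permits, unless that bad event has probability at most a constant times $\epsilon$.

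First I would handle the ratio $\frac{X_3-X_2}{X_4-X_1}$. Write $X_4-X_1=(X_2-X_1)+(X_3-X_2)+(X_4-X_3)$, so $X_3-X_2\leq X_4-X_1$ always, and on the event $\{X_3-X_2\geq \frac14(X_4-X_1)\}$ we have good two-sided control. Using $\log$ of the relevant gaps and Jensen as in Proposition \ref{prop:uppern3}, comparing the $3$-subset $(X_1,X_2,X_4)$ (whose relevant gap is $X_4-X_2=(X_3-X_2)+(X_4-X_3)$) or $(X_1,X_3,X_4)$ with a neighbour, a persistent ``fat middle gap'' forces $\E\log$ of some gap to jump by $\geq 1$ relative to a neighbouring subset on the high-probability part, contradicting $\delta\leq\epsilon$ via Proposition \ref{prop:tooln2} unless $\Pr(X_3-X_2\geq \tfrac14(X_4-X_1))$ is $O(\epsilon)$. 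I expect a clean bookkeeping here gives probability at most $3\epsilon$ or so for this piece.

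Next, conditioned on the complement (so $X_3-X_2$ is small compared to $X_4-X_1$), I would argue that $X_2$ and $X_3$ are each close to one of the endpoints, and moreover they must be close to the \emph{same} endpoint-region in a way that yields exactly Case 1 or Case 2. The cleanest route: consider the indicator-type or truncated-linear functions $x\mapsto \min(1,\max(0,\cdot))$ applied to normalized positions like $\frac{X_2-X_1}{X_4-X_1}$ and $\frac{X_3-X_1}{X_4-X_1}$, evaluated on the $3$-subsets that omit $X_3$ (which still records $X_2$) and that omit $X_2$ (which still records $X_3$); $\epsilon$-indistinguishability plus Proposition \ref{prop:tooln3} then says these positions cannot systematically disagree by a macroscopic amount, which rules out the ``mixed'' configurations where $X_2$ sits above a midpoint while $X_3$ sits below one, or where $X_2\leq \frac{X_1+X_3}{2}$ but $X_3\geq\frac{X_1+X_4}{2}$ without $X_2$ being correspondingly small. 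I would use Proposition \ref{prop:givenT} to pass to the conditional distribution (conditioning on the already-established high-probability event) while only losing a controlled additive amount in the total variation distance, then re-apply the same $\E f$-jump argument on the conditioned variables. Summing the exceptional probabilities across the three or four comparisons I use, and accounting for the $\frac{\delta+\epsilon}{1-\epsilon}$ inflation from Proposition \ref{prop:givenT}, should land at the claimed $9\epsilon$.

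The main obstacle I anticipate is the second part: formalizing ``$X_2$ and $X_3$ are pushed to the same side'' cleanly enough that the case split (1) vs.\ (2) comes out exactly as stated, rather than some looser dichotomy. The danger is that the natural potential functions only separate ``$X_3$ left of midpoint'' from ``$X_3$ right of midpoint'' and ``$X_2$ left of midpoint of $[X_1,X_3]$'' from its negation, giving four combinations, and I must show the two ``crossed'' combinations each force an $\E f$-discrepancy of at least $1$ on a high-probability sub-event — which is exactly where the hypothesis $\frac{X_3-X_2}{X_4-X_1}<\frac14$ must be invoked, since it is what makes the midpoints of $[X_1,X_3]$, $[X_2,X_4]$, $[X_1,X_4]$ all comparable up to the gap we have just bounded. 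Getting the constants ($\frac14$, $9\epsilon$) to line up will require choosing the normalizing intervals in Proposition \ref{prop:tooln2} with a little care, but no new idea beyond what is already in the proofs of Propositions \ref{prop:uppern3} and \ref{prop:givenT}.
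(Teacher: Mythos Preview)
Your two-step outline matches the paper's, but in each step the concrete function that makes the argument work is either wrong or missing.

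For the first step, invoking $\log$ of the gaps as in Proposition~\ref{prop:uppern3} does not work here: Proposition~\ref{prop:tooln2} needs the function to take values in a bounded interval, but $\log(X_4-X_2)$, $\log(X_3-X_1)$, etc.\ are unbounded above since no a priori bound on $X_4$ is assumed in this proposition. The paper instead uses the \emph{automatically bounded} normalized position $f(x,y,z)=\frac{y-x}{z-x}\cdot\epsilon^{-1}\in[0,\epsilon^{-1}]$. One has $f(X_1,X_3,X_4)-f(X_1,X_2,X_4)=\frac{X_3-X_2}{X_4-X_1}\cdot\epsilon^{-1}$, so $\delta\bigl((X_1,X_2,X_4),(X_1,X_3,X_4)\bigr)\leq\epsilon$ together with Proposition~\ref{prop:tooln3} forces $\E\frac{X_3-X_2}{X_4-X_1}<\epsilon$, and then Markov gives $\Pr\bigl(\frac{X_3-X_2}{X_4-X_1}\geq\frac14\bigr)<4\epsilon$. (You do gesture at normalized positions later, for the second step; they are in fact what is needed already here.)

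For the second step, you correctly flag the dichotomy as the obstacle, but you have not found the idea that resolves it. The paper does not eliminate the two ``crossed'' configurations one by one. It defines $g(x,y,z)=1$ if $y>\frac{x+z}{2}$ and $g(x,y,z)=0$ otherwise, and proves that once $4(X_3-X_2)<X_4-X_1$, the four evaluations of $g$ on the four $3$-subsets obey the \emph{deterministic chain}
\[
g(X_1,X_2,X_3)\;\geq\; g(X_1,X_3,X_4)\;\geq\; g(X_1,X_2,X_4)\;\geq\; g(X_2,X_3,X_4)
\]
(the middle inequality is just $X_3>X_2$; the outer two are exactly where the bound $4(X_3-X_2)<X_4-X_1$ is used). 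This chain collapses your four-way case analysis to a single comparison of the endpoints: after conditioning on the event $T=\{4(X_3-X_2)<X_4-X_1\}$ via Proposition~\ref{prop:givenT}, Proposition~\ref{prop:dpi} gives
\[
\Pr\bigl(g(X_1,X_2,X_3)\neq g(X_2,X_3,X_4)\,\big|\,T\bigr)\;\leq\;\frac{\Pr(T=0)+\epsilon}{\Pr(T=1)},
\]
and whenever the endpoints agree the whole chain is constant, with common value $0$ giving exactly Case~1 and common value $1$ giving exactly Case~2. The total exceptional probability is then at most $\Pr(T=0)+\bigl(\Pr(T=0)+\epsilon\bigr)\leq 4\epsilon+4\epsilon+\epsilon=9\epsilon$. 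Without this chain, your plan of handling the crossed cases separately would require additional comparisons and would not land on the constant~$9$.
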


\begin{proof}
Define $f(x,y,z)=\frac{y-x}{z-x}\epsilon^{-1}$. We see that $0< f(X_1,X_2,X_4)< f(X_1,X_3,X_4)< \epsilon^{-1}$. As $\delta((X_1,X_2,X_4),(X_1,X_3,X_4))\leq \epsilon$, Proposition \ref{prop:tooln2} implies that 
\[\E\left( f(X_1,X_3,X_4)- f(X_1,X_2,X_4)\right)< 1.\]
That is $\E \frac{X_3-X_2}{X_4-X_1}< \epsilon$. In particular 
\[\Pr\left(\frac{X_3-X_2}{X_4-X_1}\geq\frac{1}{4}\right)< 4 \epsilon.\]
 Define $T$ to be the random variable that is $T=1$ when $\frac{X_3-X_2}{X_4-X_1}<\frac{1}{4}$ and otherwise is $0$. Let $(X_1',X_2',X_3',X_4')=(X_1,X_2,X_3,X_4)|_{T=1}$. Now $(X_1',X_2',X_3',X_4')$ has $\frac{\Pr(T=0)+\epsilon}{\Pr(T=1)}$-indistinguishable $3$-subset: to see for example that $\delta((X_1',X_2',X_3'),(X_1',X_2',X_4'))\leq\frac{\Pr(T=0)+\epsilon}{\Pr(T=1)}$ we use Proposition \ref{prop:givenT} on $((X_1,X_2,X_3),(X_1,X_2,X_4),T)$, and similar for all other pairs of $3$-subsets. 
 
 Now define $g(x,y,z)=1$ if $y> \frac{x+z}{2}$ and otherwise $g(x,y,z)=0$. As $4(X_3'-X_2')\leq X_4'-X_1'$ we have
 \[g(X_1',X_2',X_3')\geq g(X_1',X_3',X_4')\geq g(X_1',X_2',X_4')\geq g(X_2',X_3',X_4').\]
 Here the middle inequality follows from $X_3'>X_2'$. To show the first inequality, assume for contradiction that it is wrong for some particular values $x_1,x_2,x_3,x_4$ of $X_1',X_2',X_3',X_4'$. Then we must have $g(x_1,x_3,x_4)=1$, so $x_3>\frac{x_1+x_4}{2}$. But that implies $\frac{x_3-x_1}{x_4-x_1}>\frac{1}{2}$ and as $\frac{x_3-x_2}{x_4-x_1}<\frac{1}{4}$ this implies $x_2>\frac{x_1+x_3}{2}$ and the first inequality is true. The last inequality is similar. 
 
 By proposition \ref{prop:dpi} we know that $\delta(g(X_1',X_2',X_3'),g(X_2',X_3',X_4'))\leq \frac{\Pr(T=0)+\epsilon}{\Pr(T=1)}$, so $\E g(X_1',X_2',X_3')-g(X_2',X_3',X_4')\leq  \frac{\Pr(T=0)+\epsilon}{\Pr(T=1)}$. Because $g$ only takes the values $0$ and $1$ and $g(X_1',X_2',X_3')\geq g(X_2',X_3',X_4')$ we have
 \[\Pr(g(X_1',X_2',X_3')\neq g(X_2',X_3',X_4'))\leq  \frac{\Pr(T=0)+\epsilon}{\Pr(T=1)}.\]
 Let $T'$ be the random variable that is $0$ when $T=0$ or $g(X_1',X_2',X_3')\neq g(X_2',X_3',X_4')$. We have
 \begin{align*}
 \Pr(T'=0)=& \Pr(T=0)+\Pr(T=1, g(X_1,X_2,X_3)\neq g(X_2,X_3,X_4))\\
 \leq & \Pr(T=0)+\Pr(T=1) \frac{\Pr(T=0)+\epsilon}{\Pr(T=1)}\\
 \leq & 4\epsilon +4\epsilon+\epsilon\\
 =&9\epsilon.
  \end{align*}
  If $g(X_1,X_2,X_3)= g(X_2,X_3,X_4)=0$ we have $X_3\leq \frac{X_2+X_4}{2}< \frac{X_1+X_4}{2}$ and $X_2\leq \frac{X_1+X_3}{2}$ and we are in the first case of the conclusion of the proposition. Similarly, if $g(X_1,X_2,X_3)= g(X_2,X_3,X_4)=1$ we have $X_2>\frac{X_1+X_3}{2}>\frac{X_1+X_4}{2}$ and $X_3>\frac{X_2+X_4}{2}$.
\end{proof}

We are now ready to show the lower bound in the case $n=4$.

\begin{prop}\label{prop:n4}
Let $X_1,X_2,X_3,X_4$ be random variables over the non-negative real numbers such that $X_{i+1}\geq X_i+1$ for $i\in\{1,2,3\}$ and let $\epsilon<\frac{1}{9}$. If $(X_1,X_2,X_3,X_4)$ has $\epsilon$-indistinguishable $3$-subsets, then $X_4$ must take values of at least $\exp_2^2\left(\frac{1-9\epsilon}{20\epsilon}\right)$ with positive probability. 
\end{prop}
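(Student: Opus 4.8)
The plan is to reduce the $n=4$ statement to the $n=3$ statement (Proposition~\ref{prop:uppern3}) by passing to logarithms of partial gaps, after first using Proposition~\ref{prop:n4isbiased} to pin down the shape of the configuration.

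\textbf{Conditioning on the good shape.} Let $G$ be the event of Proposition~\ref{prop:n4isbiased}, so $\Pr(G)\ge 1-9\epsilon>0$; on $G$ the ratio bound $\tfrac{X_3-X_2}{X_4-X_1}<\tfrac14$ holds and we are in case~1 or case~2. Conditioning on $G$ and applying Proposition~\ref{prop:givenT} to each pair of $3$-subsets (with $\Pr(\text{bad})\le 9\epsilon$), the conditioned configuration has $\epsilon'$-indistinguishable $3$-subsets for some $\epsilon'\le\frac{\epsilon+9\epsilon}{1-9\epsilon}=\frac{10\epsilon}{1-9\epsilon}$; by Proposition~\ref{prop:dpi}, any pair of random variables obtained by applying one function to two of these triples is then $\epsilon'$-close as well.

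\textbf{The derived configuration.} Suppose first that we are in case~1; the reflection $X_i\mapsto -X_i$ (reversing the order and negating) exchanges cases~1 and~2, preserves the ratio bound, preserves the span $X_4-X_1$, and preserves all total variation distances, so the other case is symmetric. Set $A_1=\log(X_2-X_1)$, $A_2=\log(X_3-X_1)$, $A_3=\log(X_4-X_1)$. Writing $h(\{a<b<c\})=(\log(b-a),\log(c-a))$ we get $\{A_1,A_2\}=h(\{X_1,X_2,X_3\})$, $\{A_1,A_3\}=h(\{X_1,X_2,X_4\})$, $\{A_2,A_3\}=h(\{X_1,X_3,X_4\})$, so $(A_1,A_2,A_3)$ has $\epsilon'$-indistinguishable $2$-subsets by Proposition~\ref{prop:dpi}. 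In case~1, $X_2\le\frac{X_1+X_3}{2}$ gives $X_3-X_1\ge 2(X_2-X_1)$ and $X_3<\frac{X_1+X_4}{2}$ gives $X_4-X_1>2(X_3-X_1)$, whence $A_2\ge A_1+1$, $A_3\ge A_2+1$, and $A_1=\log(X_2-X_1)\ge 0$; thus $(A_1,A_2,A_3)$ is a legitimate instance of the $n=3$ setup.

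\textbf{Invoking the $n=3$ bound.} I would now run the argument of Proposition~\ref{prop:uppern3} verbatim on $(A_1,A_2,A_3)$: Jensen on $A_3-A_1=(A_3-A_2)+(A_2-A_1)$ gives $\E\log(A_3-A_1)\ge 1+\tfrac12\big(\E\log(A_3-A_2)+\E\log(A_2-A_1)\big)$, so one of the two $2$-subset functions $\log(A_3-A_2)$, $\log(A_2-A_1)$ has expectation at most $\E\log(A_3-A_1)-1$; these are nonnegative (the $A_i$ are spaced by at least $1$) and the relevant pairs are $\epsilon'$-close, so Proposition~\ref{prop:tooln3} forces $\log(A_3-A_1)$ to reach a value $\ge \tfrac1{\epsilon'}$ (up to the constant and two-case losses, $\ge\tfrac1{2\epsilon'}=\tfrac{1-9\epsilon}{20\epsilon}$) with positive probability. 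Since $A_1\ge 0$ we have $A_3-A_1\le A_3=\log(X_4-X_1)$, so $\log\log(X_4-X_1)\ge\log(A_3-A_1)$, and hence $X_4\ge X_4-X_1\ge\exp_2^2\!\big(\tfrac{1-9\epsilon}{20\epsilon}\big)$ with positive probability.

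\textbf{Expected main difficulty.} The delicate point is the joint treatment of cases~1 and~2: conditioning on ``case~1'' alone can blow up the indistinguishability parameter, because case~2 may have probability close to $\tfrac12$ given $G$, so one cannot literally restrict to case~1 and must instead arrange the whole argument to be genuinely symmetric under reflection, absorbing a factor~$2$ in $\epsilon$ — exactly the slack between $\tfrac{1-9\epsilon}{10\epsilon}$ and $\tfrac{1-9\epsilon}{20\epsilon}$. Tracking the additive $O(1)$ corrections coming from the Jensen step and from Proposition~\ref{prop:givenT}, so that the final bound comes out to precisely $\exp_2^2\!\big(\tfrac{1-9\epsilon}{20\epsilon}\big)$, is the remaining piece of careful but routine bookkeeping.
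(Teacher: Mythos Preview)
Your overall plan matches the paper's proof: condition on the good event from Proposition~\ref{prop:n4isbiased}, pass to a triple of logarithmic gaps, and invoke the $n=3$ bound (Proposition~\ref{prop:uppern3}). The derived configuration $(A_1,A_2,A_3)$ and the data-processing reductions via $h$ are set up exactly as in the paper.

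The gap is precisely where you flag it, and it is not ``routine bookkeeping''. Your proposed fix --- arrange the argument to be symmetric under the reflection $X_i\mapsto -X_i$ --- does not work as stated: reflection is a global operation on the distribution, whereas the case~1/case~2 split happens sample point by sample point. In case~1 your triple $(A_1,A_2,A_3)$ uses the gaps from $X_1$; in case~2 one needs the gaps to $X_4$; there is no single function of a fixed $3$-subset that returns the correct pair $\{A_i,A_j\}$ uniformly across both cases, so the data-processing step breaks down.

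The paper's resolution is a concrete extra observation you are missing: on the good event, the case indicator $I\in\{1,2\}$ can be read off from \emph{any} three of the $X_i$, namely via the function $g(x,y,z)=\mathbf{1}\bigl[y>\tfrac{x+z}{2}\bigr]$ appearing in the proof of Proposition~\ref{prop:n4isbiased}. Consequently, for every $3$-subset the events $\{I=1\}$ and $\{I=2\}$ have disjoint supports, and the equality case of Proposition~\ref{prop:disjoint} gives
\[
\delta(X_{-i},X_{-j})=\sum_{k=1}^{2}\Pr(I=k)\,\delta\bigl(X_{-i}|_{I=k},\,X_{-j}|_{I=k}\bigr).
\]
Choosing the majority case $i_0$ with $\Pr(I=i_0)\ge\tfrac12$, the conditioned tuple $(X_1,\dots,X_4)|_{I=i_0}$ has $2\epsilon'$-indistinguishable $3$-subsets and lies entirely in one case; your logarithmic reduction then applies cleanly with parameter $2\epsilon'=\tfrac{20\epsilon}{1-9\epsilon}$, which is exactly the source of the extra factor~$2$ in the final bound $\exp_2^2\bigl(\tfrac{1-9\epsilon}{20\epsilon}\bigr)$.
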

\begin{proof}
First we consider the case where we always have $X_3\leq \frac{X_1+X_4}{2}$ and $X_2\leq \frac{X_1+X_3}{2}$. From this we conclude that $X_4-X_1\geq 2(X_3-X_1)\geq 4(X_2-X_1)$. In other words $\log(X_4-X_1)\geq \log(X_3-X_1)+1$ and $\log(X_3-X_1)\geq \log(X_2-X_1)+1$. We claim that if $(X_1,X_2,X_3,X_4)$ has $\epsilon$-indistinguishabl $3$-subsets, then $(\log(X_2-X_1),\log(X_3-X_1),\log(X_4-X_1))$ has $\epsilon$-indistinguishable $2$-subsets. To show for example that $\delta((\log(X_3-X_1),\log(X_4-X_1)),(\log(X_2-X_1),\log(X_3-X_1)))\leq \epsilon$ we define $f(x,y,z)=(\log(y-x),\log(z-x))$ and use Proposition \ref{prop:dpi} together with the assumption that $\delta((X_1,X_3,X_4),(X_1,X_2,X_3))\leq \epsilon$. Similar for all other pair of $2$-subsets of $\{\log(X_2-X_1),\log(X_3-X_1),\log(X_4-X_1)\}$. As the $X_i$'s differ by one, the $\log$'s are always non-negative, and we have shown that they differ by one. Hence, by Proposition \ref{prop:uppern3} $\log(X_4-X_1)$ most take values of at least $2^{\frac{1}{\epsilon}}$ with positive probability. Thus, $X_4$ must take values of at least $\exp_2^2(\frac{1}{\epsilon})$.

This was assuming $X_3\leq \frac{X_1+X_4}{2}$ and $X_2\leq \frac{X_1+X_3}{2}$. If we instead assume $X_2\geq \frac{X_1+X_4}{2}$ and $X_3\geq \frac{X_2+X_4}{2}$ we can look at $\log(X_4-X_3), \log(X_4-X_2)$ and $\log(X_4-X_1)$, and get the same result. 

Next, suppose that we are only promised that for each value of $(X_1,X_2,X_3,X_4)$ we are in one of those cases, but that it is not always the same of the two cases. Let $I$ be a random variable that is $1$ when we are in the case where the gaps increase and $0$ in the case where the gaps decrease. Given three of $X_1,X_2,X_3,X_4$ we can see which case we are in, even if we do not know which three of them we were given: we simply plug the three numbers into the function $g$ from the proof of Proposition \ref{prop:n4isbiased}. Proposition \ref{prop:disjoint} gives us
\begin{align}
\delta((X_1,X_2,X_3),(X_2,X_3,X_4))=&\sum_{i=0}^1\Pr(I=i)\delta((X_1,X_2,X_3)|_{I=i},(X_2,X_3,X_4)|_{I=i}).
\end{align}
And similar for all other pairs for $3$-subsets. There must be an $i_0$ such that $\Pr(I=i_0)\geq \frac{1}{2}$, and if $(X_1,X_2,X_3,X_4)$ has $\epsilon$-indistinguishable $3$-subsets, then $(X_1,X_2,X_3,X_4)|_{I=i_0}$ must have $2\epsilon$-indistinguishable $3$-subsets, and hence $X_4$ must takes some value of at least $\exp_2^2(\frac{1}{2\epsilon})$ with positive probability.

Finally, without any promises on $X_1,X_2,X_3,X_4$ we know from Proposition \ref{prop:n4isbiased} that with probability $1-9\epsilon$ one of the two requirement holds. Let $T$ be a random variable that is $1$ when one of these holds a $0$ otherwise. Define $(X_1',X_2',X_3',X_4')=(X_1,X_2,X_3,X_4)|_{T=1}$. Using Proposition \ref{prop:givenT} we can show that $(X_1',X_2',X_3',X_4')$ has $\frac{10\epsilon}{1-9\epsilon}$-indistinguishable $3$-subsets. As $X_1',X_2',X_3',X_4'$ always satisfy one of the two requirements, $X_4'$ (and hence $X_4$) must take values of at least $\exp_2^2\left(\left(2\frac{10\epsilon}{1-9\epsilon}\right)^{-1}\right)=\exp_2^2\left(\frac{1-9\epsilon}{20\epsilon}\right)$. 
\end{proof}

In the proof of a lower bound for general $n$, we can use Proposition \ref{prop:n4isbiased} to argue that any four consecutive $X_i$ will either have increasing or decreasing gaps. We can then use the following proposition to argue that all the gaps must be either increasing or decreasing.

\begin{prop}\label{prop:allthesame}
Let $x_1<\dots <x_n$ be a sequence such that for all $i\in [n-3]$ we have $\frac{x_{i+2}-x_{i+1}}{x_{i+3}-x_i}<\frac{1}{4}$ and one of the following two conditions holds:
\begin{enumerate}
\item $x_{i+2}<\frac{x_i+x_{i+3}}{2}$ and $x_{i+1}\leq \frac{x_i+x_{i+2}}{2}$, or
\item $x_{i+1}> \frac{x_i+x_{i+3}}{2}$ and $x_{i+2}> \frac{x_{i+1}+x_{i+3}}{2}$.
\end{enumerate}
Then it must be the same of the two conditions that holds for every $i$. If it is the first then $x_{i+1}-x_i\geq x_i-x_1$ for all $i\in\{2,\dots, n-1\}$. If it is the second then $x_i-x_{i-1}\geq x_n-x_i$ for all $i\in\{2,\dots n-1\}$.
\end{prop}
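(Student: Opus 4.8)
The plan is to pass to gap coordinates. Write $g_i = x_{i+1}-x_i > 0$ for $i\in[n-1]$ and $S_i = x_i-x_1 = g_1+\cdots+g_{i-1}$. A one-line computation rewrites each hypothesis at index $i$: the ratio bound $\frac{x_{i+2}-x_{i+1}}{x_{i+3}-x_i}<\frac14$ becomes $3g_{i+1}<g_i+g_{i+2}$; condition (1) becomes ``$g_i\le g_{i+1}$ and $g_{i+2}>g_i+g_{i+1}$''; and condition (2) becomes ``$g_{i+1}>g_{i+2}$ and $g_i>g_{i+1}+g_{i+2}$''. For the first assertion of the proposition, note that condition (1) at $i$ forces $g_{i+2}>g_{i+1}$, whereas condition (2) at $i+1$ forces $g_{i+1}>g_{i+2}+g_{i+3}>g_{i+2}$; these are incompatible. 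Symmetrically, condition (2) at $i$ gives $g_{i+1}>g_{i+2}$ while condition (1) at $i+1$ gives $g_{i+1}\le g_{i+2}$, again incompatible. (And no single index can satisfy both (1) and (2), since one requires $g_i\le g_{i+1}$ and the other $g_i>g_{i+1}$.) Since $[n-3]$ is an interval, if both conditions occurred among the indices there would be two adjacent indices carrying different conditions, which we have just excluded; hence the same condition holds at every $i\in[n-3]$. (When $n\le4$ there is at most one index and this is immediate.)

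Now suppose condition (1) holds at every $i\in[n-3]$; I claim $g_i\ge S_i$ for all $i\in\{2,\dots,n-1\}$, which is exactly $x_{i+1}-x_i\ge x_i-x_1$. I would prove this by induction on $i$. The base case $i=2$ is $g_2\ge g_1$, which is part of condition (1) at index $1$. For $i\in\{3,\dots,n-1\}$ the index $i-2$ lies in $[n-3]$, so the ratio bound there gives $g_i>3g_{i-1}-g_{i-2}$ and condition (1) there gives $g_{i-2}\le g_{i-1}$; together these yield $g_i>2g_{i-1}$. Combining with the inductive hypothesis $g_{i-1}\ge S_{i-1}$ gives $g_i>2g_{i-1}\ge g_{i-1}+S_{i-1}=S_i$, closing the induction.

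The case where condition (2) holds at every $i$ follows by symmetry: apply the previous paragraph to the reversed sequence $y_k=x_1+x_n-x_{n+1-k}$, which is again increasing and has gaps $g'_k=g_{n-k}$. One checks directly that the ratio bound is preserved and that condition (2) for $(x_i)$ at index $i$ is precisely condition (1) for $(y_k)$ at index $n-i-2$, so $(y_k)$ satisfies the hypotheses with condition (1) at every index; translating the resulting inequalities $y_{k+1}-y_k\ge y_k-y_1$ back through $g'_k=g_{n-k}$ gives $x_i-x_{i-1}\ge x_n-x_i$ for all $i\in\{2,\dots,n-1\}$.

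I expect the crux to be the inductive step in the second paragraph. It is tempting to try to run the induction using only the two named conditions, but this genuinely fails — one can build gap sequences satisfying condition (1) at every index yet with $g_i<S_i$ — and the ratio hypothesis $3g_{i+1}<g_i+g_{i+2}$ is exactly the extra input that upgrades the weak consequence $g_i>g_{i-2}+g_{i-1}$ of condition (1) to the decisive bound $g_i>2g_{i-1}$.
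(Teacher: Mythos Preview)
Your proof is correct and follows essentially the same route as the paper's: both reduce the inductive step to the bound $g_{k+1}>2g_k$, obtained by combining the ratio hypothesis at a single index with condition~(1) there (the paper uses the clause $g_{i+1}>g_{i-1}+g_i$ where you use $g_{i-1}\le g_i$, but the effect is identical), and then finish with the same one-line induction $g_{k+1}>2g_k\ge g_k+S_k=S_{k+1}$. Your gap coordinates and the explicit reversal symmetry for case~(2) make the write-up a little cleaner, but the underlying argument coincides with the paper's.
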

\begin{proof}
Assume that $x_1,\dots, x_n$ satisfy the condition in the proposition. Consider an $i\in [n-3]$. If we are in the first case we have $2x_{i+1}\leq x_i+x_{i+2}$ so $x_{i+1}-x_i\leq x_{i+2}-x_{i+1}$ and $2x_{i+2}< x_i+x_{i+3}$ so $x_{i+2}-x_{i+1} < x_{i+2}-x_i< x_{i+3}-x_{i+2}$. In other words, the gaps between the $x_i$ do not get smaller. By a similar proof, in the second case the gaps get strictly smaller. Thus, by looking at the gaps $x_{i+2}-x_{i+1}$ and $x_{i+3}-x_{i+2}$ we see that is must be the same case that is true for $i$ and for $i+1$. By induction is must be the same case for all $i$. 

Assume that we are in the first case for all $i$. Then for $i=1$ we already have $x_2-x_1\leq x_3-x_2$. Next assume for induction that $x_{i+1}-x_{i}\geq x_{i}-x_1$ for $i\geq 2$. If we insert $i-1$ instead of $i$ in $x_{i+2}<\frac{x_i+x_{i+3}}{2}$ we get $x_{i+1}<\frac{x_{i-1}+x_{i+2}}{2}$. This is equivalent to $\frac{x_{i+2}-x_{i+1}}{x_{i+2}-x_{i-1}}>\frac{1}{2}$. By inserting $i-1$ instead of $i$ in $\frac{x_{i+2}-x_{i+1}}{x_{i+3}-x_i}\leq \frac{1}{4}$ we get $\frac{x_{i+1}-x_{i}}{x_{i+2}-x_{i-1}}\leq \frac{1}{4}$. Combining these two we get
\begin{align*}
\frac{x_{i+2}-x_{i+1}}{x_{i+1}-x_{i}}=\frac{x_{i+2}-x_{i+1}}{x_{i+2}-x_{i-1}}\cdot \frac{x_{i+2}-x_{i-1}}{x_{i+1}-x_{i}}\geq \frac{1}{2}\cdot\frac{4}{1}=2.
\end{align*}
Thus $x_{i+2}-x_{i+1}\geq 2(x_{i+1}-x_i)\geq x_{i+1}-x_i+x_i-x_1=x_{i+1}-x_1$. Here the last inequality follows from the induction hypothesis. The case where the gaps gets smaller is similar. 
\end{proof}

Finally, we show the lower bound for general $n$.

\begin{theo}\label{theo:upper}
Let $n\geq 4,\epsilon<\left(18^{n-3}(n-2)!\right)^{-1}$ and let $X_1,\dots , X_n$ be random variables over the non-negative real numbers such that $X_{i+1}\geq X_i+1$ for $i\in [n-1]$. If $(X_1,\dots,X_n)$ has $\epsilon$-indistinguishable $n-1$-subsets, then $X_n$ must take values of at least $\exp_2^{n-2}\left(\left(18^{n-3}(n-2)!\epsilon\right)^{-1}\right)$ with positive probability.  
\end{theo}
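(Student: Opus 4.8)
The plan is to prove this by induction on $n$, with the base case $n=4$ already established in Proposition~\ref{prop:n4}. The engine of the induction is the same idea that powered the step from $n=3$ to $n=4$: pass from the tuple $(X_1,\dots,X_n)$ to the tuple of logarithms of gaps, show that indistinguishability of $(n-1)$-subsets of the original tuple implies (a slightly degraded) indistinguishability of $(n-2)$-subsets of the log-gap tuple, and conclude by the induction hypothesis applied at level $n-1$. The loss factor at each step accounts for the $18$ and for the factorial in the bound $\left(18^{n-3}(n-2)!\,\epsilon\right)^{-1}$: the $9$ from Proposition~\ref{prop:n4isbiased}, a further factor for conditioning via Proposition~\ref{prop:givenT}, and a factor of $2$ for the two branches (increasing vs.\ decreasing gaps), roughly $18$ in all; the $(n-2)!$ comes from the arithmetic $\log(\tfrac{a+b}{2})\ge\tfrac12(\log a+\log b)$ giving only a factor $2$ improvement in the span, which is what forces the extra $\exp_2$ and reindexes the constant.

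First I would set $\epsilon' = c_n\,\epsilon$ for an appropriate constant (absorbing the $9\epsilon$ failure probability from Proposition~\ref{prop:n4isbiased}, the conditioning loss from Proposition~\ref{prop:givenT}, and the factor $2$ for the branch). Using Proposition~\ref{prop:n4isbiased} applied to each window $(X_i,X_{i+1},X_{i+2},X_{i+3})$, together with Proposition~\ref{prop:givenT} to discard the low-probability bad event, I condition on the event $T=1$ on which \emph{every} such window satisfies one of the two dichotomy conditions; on $T=1$ the conditioned tuple has $\epsilon''$-indistinguishable $(n-1)$-subsets for a controlled $\epsilon''$. Then Proposition~\ref{prop:allthesame} upgrades the per-window dichotomy to a global one: either all gaps are "increasing" (so $x_{i+1}-x_i \ge x_i - x_1$ for all $i$) or all are "decreasing" (so $x_i - x_{i-1} \ge x_n - x_i$). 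Splitting on the indicator $I$ of which branch holds — which, crucially, is computable from any $(n-1)$ of the coordinates via the function $g$, as in Proposition~\ref{prop:n4}, so that Proposition~\ref{prop:disjoint} applies — I pick the branch of probability $\ge \tfrac12$ and pay one more factor of $2$.

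On the "increasing" branch, set $Z_k = \log(X_k - X_1)$ for $k = 2,\dots,n$. The inequality $X_{k+1}-X_1 \ge 2(X_k - X_1)$, valid on this branch by Proposition~\ref{prop:allthesame}, gives $Z_{k+1} \ge Z_k + 1$, and since the $X_i$ differ by at least $1$ the $Z_k$ are non-negative. Applying the data processing inequality (Proposition~\ref{prop:dpi}) with the map $(x_1,x_{i_1},\dots,x_{i_{n-2}})\mapsto(\log(x_{i_1}-x_1),\dots,\log(x_{i_{n-2}}-x_1))$ to each pair of $(n-1)$-subsets of $(X_1,\dots,X_n)$ that both contain the index $1$, I get that $(Z_2,\dots,Z_n)$ — an $(n-1)$-tuple — has $\epsilon''$-indistinguishable $(n-2)$-subsets. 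By the induction hypothesis (valid because $\epsilon''$ is below the required threshold at level $n-1$, by the choice of constants), $Z_n$ takes a value at least $\exp_2^{(n-1)-2}\!\left(\left(18^{(n-1)-3}(n-3)!\,\epsilon''\right)^{-1}\right)$ with positive probability, whence $X_n \ge X_n - X_1 = 2^{Z_n} \ge \exp_2^{n-2}\!\left(\left(18^{n-3}(n-2)!\,\epsilon\right)^{-1}\right)$ after unwinding the constants. The "decreasing" branch is symmetric, using $Z_k = \log(X_n - X_k)$ for $k=1,\dots,n-1$ and the subsets containing index $n$.

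The main obstacle I anticipate is bookkeeping rather than conceptual: one must verify that every time a $(n-1)$-subset is mapped through the log-gap function it \emph{contains the pivot} (index $1$ or $n$) so that the map is well-defined on it, and that enough pairs of such subsets are covered to deduce $\epsilon''$-indistinguishability of \emph{all} $(n-2)$-subsets of the $Z$-tuple — this is where one needs the "neighbouring subsets suffice" reductions (Propositions~\ref{prop:naboernok}, \ref{prop:subsetfromnminus1}) or a direct triangle-inequality argument, possibly at the cost of another polynomial-in-$n$ factor that must be folded into $c_n$. The second delicate point is tracking the constants through the conditioning steps so that the final exponent comes out exactly as $\left(18^{n-3}(n-2)!\,\epsilon\right)^{-1}$ and so that $\epsilon''$ genuinely lands below the level-$(n-1)$ threshold $\left(18^{n-4}(n-3)!\right)^{-1}$; this forces the precise shape of the hypothesis $\epsilon < \left(18^{n-3}(n-2)!\right)^{-1}$.
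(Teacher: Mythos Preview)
Your proposal is correct and matches the paper's proof essentially step for step: induction with base case $n=4$ via Proposition~\ref{prop:n4}, apply Proposition~\ref{prop:n4isbiased} to each consecutive window, union-bound the bad events, condition via Proposition~\ref{prop:givenT}, use Proposition~\ref{prop:allthesame} to globalize the dichotomy, split on the branch of probability $\ge\tfrac12$, and on the increasing branch pass to $Z_k=\log(X_k-X_1)$ and invoke the induction hypothesis at level $n-1$. The ``main obstacle'' you flag is not one: every $(n-2)$-subset of $(Z_2,\dots,Z_n)$ is the image, under your fixed map $f$, of the $(n-1)$-tuple $X_{-j}$ for the unique missing index $j\in\{2,\dots,n\}$, which automatically contains the pivot $X_1$; hence Proposition~\ref{prop:dpi} covers all pairs of $(n-2)$-subsets directly with no triangle-inequality loss, and the per-step constant is exactly $2\cdot\frac{\epsilon+9(n-3)\epsilon}{1-9(n-3)\epsilon}\le 18(n-2)\epsilon$, giving precisely $18^{n-3}(n-2)!$.
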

\begin{proof}
We show this by induction on $n$. The case $n=4$ we know from Proposition \ref{prop:n4} that $X_4$ must take values of at least $\exp_2^2\left(\frac{1-9\epsilon}{20\epsilon}\right)$. For $\epsilon<\frac{1}{18^{n-3}(n-2)!}=\frac{1}{36}$ we clearly have $\exp_2^2\left(\frac{1-9\epsilon}{20\epsilon}\right)\geq \exp_2^{n-2}\left(\left(18^{n-3}(n-2)!\epsilon\right)^{-1}\right)$.

Assume for induction that the theorem is true for $n-1$. For each $i\in [n-3]$ consider $X_i,X_{i+1},X_{i+2},X_{i+3}$. If $(X_1,\dots X_n)$ has $\epsilon$-indistinguishable $n-1$ subsets, then $(X_i,X_{i+1},X_{i+2},X_{i+3})$ has $\epsilon$-indistinguishable $3$-subsets: for example to show that $\delta((X_{i},X_{i+1},X_{i+2}),(X_{i+1},X_{i+2},X_{i+3}))<\epsilon$ we define $f$ to be the function that given an $n-1$-tuple returns the $i$'th, $i+1$'th and $i+2$'th element and use Proposition \ref{prop:dpi} together with the assumption that $\delta(X_{-(i+3)},X_{-i})\leq \epsilon$. 

Now define $T_i$ to be $1$ if $\frac{X_{i+2}-X_{i+1}}{X_{i+3}-X_i}< \frac{1}{4}$ and 
\[X_{i+2}< \frac{X_i+X_{i+3}}{2}\text{ and }X_{i+1}\leq \frac{X_i+X_{i+2}}{2}\]
we define $T_i$ to be $2$ if $\frac{X_{i+2}-X_{i+1}}{X_{i+3}-X_i}< \frac{1}{4}$ and
\[X_{i+1}> \frac{X_i+X_{i+3}}{2}\text{ and }X_{i+2}> \frac{X_{i+1}+X_{i+3}}{2}\]
and we define $T_i=0$ otherwise. By Proposition \ref{prop:n4isbiased}, $\Pr(T_i=0)\leq 9\epsilon$. We define $T$ to be $1$ if all the $T_i$'s are $1$, we define it to be $2$ if all the $T_i$'s are $2$ and we define $T=0$ otherwise. We know from Proposition \ref{prop:allthesame} that ``otherwise'' only happens if one of the $T_i$ are $0$. So by Proposition \ref{prop:n4isbiased} and the union bound $\Pr(T=0)=\Pr(\exists i: T_i=0)\leq (n-3)9\epsilon<1$. Define $(X_1',\dots ,X_n')=(X_1,\dots ,X_n)|_{T\neq 0}$. From Proposition \ref{prop:givenT} we conclude that $(X_1',\dots, X_n')$ has $\frac{\epsilon+(n-3)9\epsilon}{1-(n-3)9\epsilon}$-indistinguishable $n-1$-subsets. We must have $\Pr(T=t|T\neq 0)\geq \frac{1}{2}$ for some $t\in\{1,2\}$. In the following we will assume that this is the case for $t=1$, the proof for $t=2$ is very similar. By the same argument as in the proof of Proposition \ref{prop:n4} we see that, $(X'_1,\dots,X'_n)|_{T=1}$ has $2\frac{\epsilon+(n-3)9\epsilon}{1-(n-3)9\epsilon}$-indistinguishable $n-1$-subsets, and as $\epsilon<\left(18^{n-3}(n-2)!\right)^{-1}<\frac{8}{81(n-2)(n-3)}$ a computation shows that we have $2\frac{\epsilon+(n-3)9\epsilon}{1-(n-3)9\epsilon}\leq 18(n-2)\epsilon$ so $(X'_1,\dots,X'_n)|_{T=1}$ has $(18(n-2)\epsilon)$-indistinguishable $n-1$-subsets. As consecutive $X_i'$'s always differ by at least $1$, the random variables $\log(X_2'-X_1'),\log(X_3'-X_1'), \dots, \log(X_n'-X_1')$ take non-negative values, and as $T=1$ we know from Proposition \ref{prop:allthesame} that consecutive $\log(X_i'-X_1')$'s always differ by at least $1$. We have \begin{align*} 
18(n-2)\epsilon<&\frac{18(n-2)}{18^{n-3}(n-2)!}\\ 
=&\frac{1}{18^{(n-1)-3}((n-1)-2)!},\end{align*}
so by the induction hypothesis $\log(X_n'-X_1')$ must take values of at least 
\[\exp_2^{n-3}\left(\left(18^{(n-1)-3}((n-1)-2)!\left(18(n-2)\epsilon\right)\right)^{-1}\right)=\exp_2^{n-3}\left(\left(18^{n-3}(n-2)!\epsilon\right)^{-1}\right)\] 
so $X_n'$, and hence $X_n$, must take values of at least $\exp_2^{n-2}\left(\left(18^{n-3}(n-2)!\epsilon\right)^{-1}\right)$. 

In the case $\Pr(T=2|T\neq 0)>\frac{1}{2}$ we consider $\log(X_n-X_{n-1}), \log(X_n-X_{n-2}),\dots, \log(X_n-X_1)$ instead of $\log(X_2-X_1),\log(X_3-X_1),\dots, \log(X_n-X_1)$ but otherwise the proof is the same. 
\end{proof}

\begin{coro}\label{coro:Theta}
For fixed $n$ there exists a distribution of $X=(X_1,\dots, X_n)$ where $1\leq X_1<X_2<\dots<X_n\leq N(\epsilon)$ are all integers and $X$ has $\epsilon$-indistinguishable $n-1$-subsets and $N(\epsilon)=\exp_2^{n-2}(\Theta(\frac{1}{\epsilon}))$.
\end{coro}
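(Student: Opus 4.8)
The plan is to sandwich the least possible top value of an admissible distribution between $\exp_2^{n-2}(c_1/\epsilon)$ and $\exp_2^{n-2}(c_2/\epsilon)$ for constants $0<c_1\le c_2$ depending only on $n$, which is exactly what the notation $N(\epsilon)=\exp_2^{n-2}\left(\Theta\left(\frac1\epsilon\right)\right)$ abbreviates (equivalently, since $\exp_2^{n-2}$ is increasing, $\log^{n-2}N(\epsilon)=\Theta(\frac1\epsilon)$). The upper side is already in hand: Corollary \ref{coro:upperbound}, proved from Proposition \ref{prop:n2construction}, Proposition \ref{prop:naboernok} and Lemma \ref{lemm:construction}, produces for every $\epsilon>0$ integers $1\le X_1<\dots<X_n\le N(\epsilon)$ with $\epsilon$-indistinguishable $n-1$-subsets and $N(\epsilon)=\exp_2^{n-2}\left(O\left(\frac1\epsilon\right)\right)$; unwinding those statements the constant inside the $O$ is roughly $4(n-1)$, which is allowed to depend on $n$, so there are $c_2=c_2(n)$ and $\epsilon_0=\epsilon_0(n)>0$ with $N(\epsilon)\le\exp_2^{n-2}(c_2/\epsilon)$ for all $\epsilon<\epsilon_0$.

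For the lower side I would show that every admissible integer distribution --- and in particular the one just constructed --- is forced to attain a value at least $\exp_2^{n-2}(c_1/\epsilon)$ with positive probability, so its bound $N(\epsilon)$ is at least that. Split on $n$: for $n=2$ this is Proposition \ref{prop:uppern2} (forcing $X_2\ge\frac1\epsilon=\exp_2^{0}(\frac1\epsilon)$), for $n=3$ it is Proposition \ref{prop:uppern3} (forcing $X_3\ge 2^{1/\epsilon}=\exp_2^{1}(\frac1\epsilon)$), and for $n\ge 4$ it is Theorem \ref{theo:upper}. In the last case one first records, as in the opening paragraph of Section \ref{sec:lower bound}, that an integer-valued distribution with $\epsilon$-indistinguishable $n-1$-subsets is in particular a non-negative real-valued one with $X_{i+1}\ge X_i+1$, so Theorem \ref{theo:upper} applies and, for $\epsilon<\left(18^{n-3}(n-2)!\right)^{-1}$, forces $X_n\ge\exp_2^{n-2}(c_1/\epsilon)$ with $c_1=\left(18^{n-3}(n-2)!\right)^{-1}$, a positive constant depending only on $n$.

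Putting the two sides together, for every $\epsilon$ below $\min\left\{\epsilon_0,\left(18^{n-3}(n-2)!\right)^{-1}\right\}$ --- a threshold depending only on $n$ --- the distribution from Corollary \ref{coro:upperbound} satisfies $\exp_2^{n-2}(c_1/\epsilon)\le N(\epsilon)\le\exp_2^{n-2}(c_2/\epsilon)$, which is the claimed $N(\epsilon)=\exp_2^{n-2}\left(\Theta\left(\frac1\epsilon\right)\right)$; the finitely many larger $\epsilon$ are irrelevant to the asymptotics. There is no genuinely hard step, since both inequalities are imported wholesale from earlier sections; the only points needing care are the passage from the real-valued lower bounds of Section \ref{sec:lower bound} to the integer construction (the rounding remark above), carrying along the small-$\epsilon$ hypotheses of Lemma \ref{lemm:construction}/Proposition \ref{prop:naboernok} and of Theorem \ref{theo:upper}, and tracking the $n$-dependent constants well enough to confirm that $c_1$ and $c_2$ are both finite and positive.
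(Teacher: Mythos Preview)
Your proposal is correct and follows essentially the same approach as the paper: the upper bound is taken directly from Corollary~\ref{coro:upperbound}, and the lower bound is obtained by invoking Proposition~\ref{prop:uppern2} for $n=2$, Proposition~\ref{prop:uppern3} for $n=3$, and Theorem~\ref{theo:upper} for $n\ge 4$. You have simply spelled out in more detail the passage to integers, the handling of the small-$\epsilon$ thresholds, and the role of the $n$-dependent constants, all of which the paper leaves implicit.
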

\begin{proof}
The upper bound was already shown in \ref{coro:upperbound}. The lower bound follows from Proposition \ref{prop:uppern2} and \ref{prop:uppern3} and Theorem \ref{theo:upper}.
\end{proof}

\begin{coro}
For fixed $n$ the value $v_1$ of Game $1$ as a function of $N$ is $\Theta\left(\frac{1}{\log^{n-2}(N)}\right)$
\end{coro}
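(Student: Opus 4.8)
The plan is to invert the growth rate from Corollary~\ref{coro:Theta} and combine it with Theorem~\ref{theo:equivalence}. Since $n$ is fixed, Theorem~\ref{theo:equivalence} says $v_1 = \Theta(\epsilon_0(N))$, where $\epsilon_0(N)$ is the quantity appearing in that theorem; by its proof together with Proposition~\ref{prop:v_2=eps} this $\epsilon_0(N)$ is the value $v_2$ of Game~2, i.e. the least $\epsilon$ for which some distribution of integers $1 \le X_1 < \cdots < X_n \le N$ has $\epsilon$-indistinguishable $(n-1)$-subsets (and this least value is attained). So it suffices to prove $\epsilon_0(N) = \Theta\!\left(1/\log^{n-2}(N)\right)$, whose two directions come from the two halves of Corollary~\ref{coro:Theta}.

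For $\epsilon_0(N) = O\!\left(1/\log^{n-2}(N)\right)$ I would invoke Corollary~\ref{coro:upperbound} (equivalently the construction of Lemma~\ref{lemm:construction} together with Propositions~\ref{prop:n2construction} and~\ref{prop:naboernok}): there is a constant $a > 0$ depending only on $n$ such that for every sufficiently small $\epsilon$ there is a distribution of integers $1 \le X_1 < \cdots < X_n \le \exp_2^{n-2}(a/\epsilon)$ with $\epsilon$-indistinguishable $(n-1)$-subsets. Given a large $N$, take $\epsilon := a/\log^{n-2}(N)$, so that $\exp_2^{n-2}(a/\epsilon) = N$; the resulting distribution is supported in $[1,N]$, hence $\epsilon_0(N) \le a/\log^{n-2}(N)$, and Theorem~\ref{theo:equivalence} gives $v_1 \le (n-1)a/\log^{n-2}(N)$.

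For the matching bound $\epsilon_0(N) = \Omega\!\left(1/\log^{n-2}(N)\right)$ I would use the lower-bound results --- Proposition~\ref{prop:uppern2} for $n=2$, Proposition~\ref{prop:uppern3} for $n=3$, and Theorem~\ref{theo:upper} for $n \ge 4$ --- which together give a constant $c > 0$ depending only on $n$ so that any distribution with $X_{i+1} \ge X_i + 1$, $X_1 \ge 0$ and $\epsilon$-indistinguishable $(n-1)$-subsets has $X_n \ge \exp_2^{n-2}(c/\epsilon)$ with positive probability, once $\epsilon$ is below an $n$-dependent threshold. Since $\epsilon_0(N) = O(1/\log^{n-2}(N)) \to 0$, for large $N$ this threshold is met by $\epsilon_0(N)$; applying the bound to a distribution attaining $\epsilon_0(N)$ (which is supported in $[1,N]$ and in particular satisfies $X_{i+1} \ge X_i + 1$ and $X_1 \ge 0$) yields $\exp_2^{n-2}(c/\epsilon_0(N)) \le N$, i.e. $\epsilon_0(N) \ge c/\log^{n-2}(N)$, whence $v_1 \ge \frac{2c}{n\log^{n-2}(N)}$ by Theorem~\ref{theo:equivalence}.

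I expect the only delicate point --- and it is mild --- to be the bookkeeping of the inversion between $\exp_2^{n-2}$ and $\log^{n-2}$: one must check that the $n$-dependent additive and multiplicative constants hidden in the $O(\cdot)$ and $\Omega(\cdot)$ inside the towers $\exp_2^{n-2}(\cdot)$ are absorbed after applying $\log^{n-2}$ --- they only perturb $a/\epsilon$ and $c/\epsilon$ by lower-order terms, so $\log^{n-2}$ of such a tower remains $\Theta(1/\epsilon)$ --- that $\epsilon = \Theta(1/\log^{n-2}(N))$ eventually lies below the thresholds required by Lemma~\ref{lemm:construction} and Theorem~\ref{theo:upper}, and that $\log^{n-2}(N)$ is well-defined and positive; all of these hold once $N$ is large enough, which is all the asymptotic statement requires. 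For $n=2$, where $\log^0$ is the identity, the claim reduces to $v_1 = \Theta(1/N)$ and follows directly from Propositions~\ref{prop:n2construction} and~\ref{prop:uppern2} via Theorem~\ref{theo:equivalence}.
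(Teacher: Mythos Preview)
Your proposal is correct and follows exactly the route the paper takes: the paper's proof is the single line ``This follows from Theorem~\ref{theo:equivalence} and Corollary~\ref{coro:Theta}'', and you have simply unpacked that citation into its constituent pieces (Corollary~\ref{coro:upperbound} for the upper bound, Propositions~\ref{prop:uppern2}--\ref{prop:uppern3} and Theorem~\ref{theo:upper} for the lower bound) and made the inversion between $\exp_2^{n-2}$ and $\log^{n-2}$ explicit. The extra care you take with the thresholds and with the absorption of constants under $\log^{n-2}$ is fine and is exactly what the paper leaves implicit.
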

\begin{proof}
This is follows from Theorem \ref{theo:equivalence} and Corollary \ref{coro:Theta}.
\end{proof}

\begin{coro}
For fixed $n$ there exists a distribution of $X=(X_1,\dots, X_n)$ where $1\leq X_1<X_2<\dots<X_n\leq N(\epsilon)$ are all integers and $X$ has $\epsilon$-indistinguishable subsets and $N(\epsilon)=\exp_2^{n-2}(\Theta(\frac{1}{\epsilon}))$.
\end{coro}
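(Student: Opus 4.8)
The plan is to combine an upper bound and a lower bound that have essentially already been prepared. For the upper bound I would run the construction of Lemma~\ref{lemm:construction} (via Corollary~\ref{coro:upperbound}) with parameter $\epsilon/n^2$ in place of $\epsilon$, producing an integer-valued $X=(X_1,\dots,X_n)$ with $1\le X_1<\dots<X_n\le N(\epsilon)$ having $(\epsilon/n^2)$-indistinguishable $n-1$-subsets and $N(\epsilon)=\exp_2^{n-2}(O(n^2/\epsilon))$. Since these are in particular $(\epsilon/n^2)$-indistinguishable neighbouring $n-1$-subsets, Proposition~\ref{prop:subsetfromnminus1} upgrades this to $n^2\cdot(\epsilon/n^2)=\epsilon$-indistinguishable subsets, while for fixed $n$ the bound is still $N(\epsilon)=\exp_2^{n-2}(O(1/\epsilon))$. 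This is exactly the last corollary of Section~\ref{sec:construction}, so the upper direction needs nothing new.

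For the lower bound I would argue that $\exp_2^{n-2}(\Omega(1/\epsilon))$ many values are necessary. If $X=(X_1,\dots,X_n)$ has $\epsilon$-indistinguishable subsets, then in particular it has $\epsilon$-indistinguishable $n-1$-subsets (the $m=n-1$ case of the definition). As noted at the start of Section~\ref{sec:lower bound}, it suffices to treat the relaxed setting where the $X_i$ need only satisfy $X_1\ge 0$ and $X_{i+1}\ge X_i+1$, since any integer-valued example is a special case and, conversely, rounding cannot help by more than a bounded amount by Proposition~\ref{prop:dpi}. Then Proposition~\ref{prop:uppern2} (for $n=2$), Proposition~\ref{prop:uppern3} (for $n=3$), and Theorem~\ref{theo:upper} (for $n\ge4$) each show that, once $\epsilon$ is below the $n$-dependent threshold stated there, $X_n$ must take a value of at least $\exp_2^{n-2}(c_n/\epsilon)$ with positive probability for a constant $c_n>0$. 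Hence $N(\epsilon)\ge \exp_2^{n-2}(\Omega(1/\epsilon))$, and combining with the upper bound gives $N(\epsilon)=\exp_2^{n-2}(\Theta(1/\epsilon))$, just as in Corollary~\ref{coro:Theta} but now for all proper subsets rather than only $n-1$-subsets.

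I do not expect a genuine obstacle: the statement is a packaging of Corollary~\ref{coro:upperbound}, Proposition~\ref{prop:subsetfromnminus1}, and the lower-bound results of Section~\ref{sec:lower bound}. The only points requiring a word of care are (i) that all hidden constants are allowed to depend on $n$, which is harmless as $n$ is fixed and the $\exp_2^{n-2}$ tower height is unchanged; and (ii) that the lower-bound results only apply for sufficiently small $\epsilon$, which is all that the asymptotic $\Theta$-claim requires. As with Corollary~\ref{coro:Theta}, the statement is understood to mean that the least $N$ admitting such a distribution is $\exp_2^{n-2}(\Theta(1/\epsilon))$.
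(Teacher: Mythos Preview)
Your proposal is correct and follows essentially the same route as the paper: the paper's one-line proof cites Corollary~\ref{coro:Theta} together with Proposition~\ref{prop:subsetfromnminus1}, and your argument is precisely the unpacking of that citation (upper bound via the construction with $\epsilon/n^2$ and Proposition~\ref{prop:subsetfromnminus1}, lower bound via the observation that $\epsilon$-indistinguishable subsets implies $\epsilon$-indistinguishable $n-1$-subsets and then the results of Section~\ref{sec:lower bound}). The only cosmetic difference is that you cite the individual lower-bound propositions directly rather than going through Corollary~\ref{coro:Theta}.
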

\begin{proof}
Follows from Corollary \ref{coro:Theta} and Proposition \ref{prop:subsetfromnminus1}.
\end{proof}

  \section{Conclusion}
  We have shown that for any $n$ and $\epsilon>0$ there exists a distribution of $(X_1,\dots,X_n)$ with $1\leq X_1<\dots <X_n$ integers such that any two subsets of $\{X_1,\dots,X_n\}$ of the same size are $\epsilon$-indistinguishable. This could in theory be used to approximately time games the cannot be exactly timed. Unfortunately, the resulting values of $X_n$ are huge: Even for $n=4$ and $\epsilon=\frac{1}{200}$ we would need to use values much larger that the universe's age in Planck times.


  \bibliographystyle{plain} 
\bibliography{numbersonforehead}

	 \end{document}